\documentclass[journal,twoside]{IEEEtran}
\usepackage{graphicx}
\usepackage{epstopdf}
\usepackage{placeins}
\usepackage{array}
\DeclareGraphicsExtensions{.eps,.pdf} 
\graphicspath{ {figures/} }
\usepackage{cite}
\usepackage{balance}
\usepackage{amssymb,amsmath}
\usepackage{subfigure}
\usepackage[ruled,vlined]{algorithm2e}
\usepackage{amssymb,amsmath,mathtools}
\usepackage{amssymb,amsmath}
\usepackage{algorithmic}
\usepackage{color}
\usepackage{multirow}
\makeatletter
\newcommand\restartchapters{\par
  \setcounter{chapter}{0}%
  \setcounter{section}{0}%
  \gdef\@chapapp{\chaptername}%
  \gdef\thechapter{\@arabic\c@chapter}}
\makeatother

\newtheorem{remark}{\it \underline{Remark}}

\newtheorem{lemma}{\it \underline{Lemma}}

\newtheorem{proposition}{\it \underline{Proposition}}

\newcommand{\tr}{{\mathrm{Tr}}}

\newcommand{\maxi}{{\mathtt{maximize}}}
\newcommand{\mini}{{\mathtt{min}}}

\newcommand{\st}{{\mathtt{s.t.}}}
\newcommand{\F}{{\mathrm{F}}}
\newcommand{\Pro}{{\mathtt{Prob}}}

\renewcommand{\algorithmicrequire}{\textbf{Input:}}

\newcommand{\ds}{\displaystyle}

\setlength{\textfloatsep}{8pt} 
\makeatletter
\g@addto@macro\normalsize{%
 \setlength\abovedisplayskip{4.0pt}
 \setlength\belowdisplayskip{4.0pt}
 \setlength\abovedisplayshortskip{4.0pt}
 \setlength\belowdisplayshortskip{4.0pt}
}
\makeatother

\begin{document}
\bstctlcite{IEEEexample:BSTcontrol}
\title{A New Design Paradigm for Secure \\ Full-Duplex Multiuser Systems}
\author{
\thanks{Copyright \copyright\; 2018 IEEE. Personal use of this material is permitted. Permission from IEEE must be obtained for all other uses, including reprinting/republishing this material for advertising or promotional purposes, collecting new collected works for resale or redistribution to servers or lists, or reuse of any copyrighted component of this work in other works, by sending a request to pubs-permissions@ieee.org. Citation: V.-D. Nguyen, H. V. Nguyen, O. A. Dobre, and O.-S. Shin, ``A New Design Paradigm for Secure Full-Duplex Multiuser Systems,''  IEEE J. Select. Areas  Commun, accepted to appear, DOI: 10.1109/JSAC.2018.2824379. Available: https://ieeexplore.ieee.org/document/8333690/.}
\IEEEauthorblockN{Van-Dinh Nguyen,  Hieu V. Nguyen,  Octavia A. Dobre, and Oh-Soon Shin}
\thanks{V.-D.~Nguyen, H.~V.~Nguyen, and O.-S.~Shin are with the School of Electronic Engineering $\&$ Department of  ICMC Convergence Technology, Soongsil University, Seoul 06978, Korea  (e-mail: \{nguyenvandinh, hieuvnguyen, osshin\}@ssu.ac.kr).}
\thanks{O.~A.~Dobre is with the Faculty of Engineering and Applied Science, Memorial University, St. John's, NL, Canada (e-mail: odobre@mun.ca).}
\thanks{Part of this work will be presented at the IEEE International Conference on Communications (ICC), USA, May 2018.}
\vspace*{-0.45cm}}
\maketitle
\thispagestyle{empty}
\pagestyle{empty}
\begin{abstract}
We consider a full-duplex (FD) multiuser system where an FD base station (BS) is designed to simultaneously serve both downlink (DL)  and uplink (UL) users in the presence of half-duplex eavesdroppers (Eves). The problem is to maximize the minimum (max-min) secrecy rate (SR) among all legitimate users, where the information signals at the FD-BS are accompanied with  artificial noise  to debilitate the Eves'  channels. To enhance the max-min SR, a major part of the power budget should be allocated to serve the users with poor channel qualities, such as those  far from the FD-BS,  undermining the SR for other users, and thus compromising the  SR per-user.
In addition, the main obstacle in designing an FD system is due to the self-interference (SI)  and  co-channel interference (CCI)  among users. We therefore propose an alternative solution, where the FD-BS uses a fraction of the time block to serve near DL users and far UL users, and the remaining fractional time to serve  other users. The proposed scheme mitigates the harmful effects of SI, CCI and multiuser interference, and  provides system robustness. The SR optimization  problem has a highly nonconcave and nonsmooth objective, subject to nonconvex constraints. For the case of perfect channel state information (CSI), we develop a  low-complexity path-following algorithm, which involves only a simple convex program of moderate dimension at each iteration. We show that our path-following algorithm guarantees convergence at least to a local optimum. Then, we extend the path-following algorithm to the cases of partially known Eves' CSI, where only statistics of CSI for the Eves are known, and  worst-case scenario in which  Eves can employ a more advanced linear decoder. The merit of our proposed approach is further demonstrated  by extensive numerical results.
\end{abstract}
\begin{IEEEkeywords}
Artificial noise, full-duplex radios, full-duplex self-interference,  fractional time allocation,  nonconvex programming,  transmit beamforming, physical-layer security. 
\end{IEEEkeywords}

\section{Introduction} \label{Introduction}
The explosive demand for new services and data traffic  is constantly on the rise, pushing  new developments of signal processing and communication technologies \cite{Andrews-14-A}. It is widely believed that multiple-antenna techniques  can offer extra degrees-of-freedom (DoF) to  efficiently allocate resources, which helps reduce the bandwidth and/or power while still maintaining the same quality-of-service (QoS) requirements \cite{MietznerCST09}. However, the half-duplex (HD) radio, where downlink (DL) and uplink (UL) transmissions occur orthogonally   either in time or in frequency,  leads to  under-utilization  of radio resources,  and may no longer provide  substantial improvements in system performance even if multiple antennas are employed.

By enabling simultaneous transmission and reception on the same channel, full-duplex (FD) radio, which  offers considerable potential of doubling the spectral efficiency compared to its HD counterpart, has arisen as a promising technology for the fifth generation of mobile communications (5G) \cite{ZhangCM15,Wong5Gbook17}. The major challenge in designing an FD radio is to suppress the self-interference (SI) caused by  signal leakage from the DL transmission to the UL  reception on the same device to a  potentially suitable level, such as a few dB above background noise. Fortunately, recent advances in hardware design have allowed the FD radio  to be  implemented at a reasonable cost while canceling out  most of the SI \cite{Riihonen-SP-11,DUPLO,Duarte:TWC:12,Saetal14}. Since then, applying FD radio to a base station (BS) in small cell-based systems or to an access point in WiFi, in which the  transmit power is relatively low,   has been widely considered. FD for multiple-input multiple-output (MIMO) in single-cell systems has been investigated to  achieve a higher spectral efficiency \cite{Dan:TWC:14,SH:TWC:15,Nguyen:Access:17}, and an extension to multi-cell scenarios has also been considered in \cite{Tam:TCOM:16,YadavAcess17,AquilinaTCOM17}.  Another downside of FD radio in a typical cellular network is that co-channel interference (CCI) caused by the UL transmission of UL users  severely impairs the DL reception of DL users. Therefore, it is  challenging to fully capitalize on the benefits  that FD radios may bring to 5G wireless networks.

Wireless networks have a very wide range of applications, and an unprecedented amount of personal and sensitive information is transmitted over wireless channels. Consequently,   wireless network security is a crucial issue  due to the unalterable open nature of the wireless medium. Physical-layer (PHY-layer) security can potentially provide  information privacy at the PHY-layer by taking advantage of the characteristics  of the wireless medium \cite{ChenCST16,WuTCOM17,WuIT16,Nguyen:TIFS:16,ZhengTSP13,LiSPL16,AkgunTCOM17,ZhuTSP14,ZhuTWC16,SunTWC16,Lui:SP:14,Ng-14-A}. An effective means to delivering PHY-layer security is to adopt artificial noise (AN) to degrade the decoding capability of the eavesdropper (Eve) \cite{WuTCOM17,WuIT16,Nguyen:TIFS:16}, such that the confidential messages are useless for  Eve. Notably, with FD radio, we can exploit AN even more effectively \cite{ChenCST16}.

\vspace*{-0.4cm}
\subsection{Related Work}
In this subsection, we  discuss the most recent and relevant works for PHY-layer security that exploit FD radio.  Zheng \textit{et al.} \cite{ZhengTSP13} proposed a self-protection scheme by exploiting FD radio at  the desired user to simultaneously receive information and transmit AN in point-to-point transmission. The secrecy rate (SR) optimization was studied in both cases of known and unknown channel state information (CSI) of the Eve. The work in \cite{LiSPL16} considered a  MIMO Gaussian wiretap channel with an FD jamming receiver to secure the DoF. The authors in \cite{AkgunTCOM17} extended the prior work to a multiuser scenario, where the transmitter is equipped with multiple antennas to guarantee the individual SR for multiple  users. However, these approaches require  high hardware complexity of the receivers, which may be difficult to achieve in practice because the end devices should be of very low complexity.

By shifting the computational hardware complexity from the receivers to the BS in cellular networks,   FD-BS secure communications are of  great interest thanks to providing communication secrecy to both UL and DL transmissions. In \cite{ZhuTSP14}, joint information  and AN beamforming at the FD-BS was investigated to guarantee PHY-layer security of  single-antenna UL  and DL users. However, this work assumed that there is no SI and CCI, which  is highly idealistic. Therefore, an extension was proposed in \cite{ZhuTWC16} by considering both SI and CCI. The work in \cite{SunTWC16} analyzed a trade-off between the DL and UL transmit power in FD systems to secure
multiple DL  and UL users. The common technique used in these works is  semi-definite relaxation (SDR) that relaxes  nonconvex constraints  to arrive at a semi-definite  program (SDP). Though in-depth results were presented, the beamforming vectors that are recovered from the covariance matrices of SDR may not perform effectively, and the dimension of covariance matrices is relatively large \cite{PhanTSP}. By leveraging the inner approximation principle,  our previous works in \cite{Nguyen:Access:17} and \cite{Nguyen:TCOM:17} have recently proposed  path-following  algorithms to efficiently address  the nonconvex optimization problems in FD multiuser systems by solving a sequence of  convex programs. These  algorithms completely avoid rank-one constraints and jointly optimize all optimization variables in a single layer.

\vspace*{-0.3cm}	
\subsection{Motivation and Contributions}
Even with recent advances in hardware design for SI cancellation techniques, the harmful effect  of residual SI cannot be neglected if it is not properly controlled, and is proportional to the DL transmission power.  The CCI may become strong and uncontrolled whenever an UL user is located near DL users. These shortcomings limit the performance of FD systems \cite{Dan:TWC:14,SH:TWC:15,Nguyen:Access:17,Tam:TCOM:16,YadavAcess17,AquilinaTCOM17,SunTWC16}. In addition, a major part of the FD-BS power budget allocated to the DL users with poor channel conditions to improve their QoS  will significantly reduce the QoS for other users due to an increase in both the SI and multiuser interference (MUI). Recently, non-orthogonal multiple access (NOMA) \cite{Wong5Gbook17,Islam:CST17,DSP16,Nguyen:JSAC:17} has been introduced to improve the far users' (i.e.,  the cell-edge users) throughput by allowing near users (i.e.,  the cell-center users) to access and decode their intended signals. In other words,  far users must sacrifice their own information privacy in NOMA \cite{Islam:CST17,DSP16,Nguyen:JSAC:17,Choi15,NguyenCLFT17}. In all aforementioned work in FD security, the number of transmit antennas is usually required to be larger than  the number of users to efficiently manage the network interference. Otherwise, the AN and MUI will impair the channel quality of the desired users, leading to a significant loss in system performance.

In this paper, we propose a new transmission design to further resolve the  practical restrictions given above. Specifically, the near DL users and far UL users are served in a fraction of the time block, and then FD-BS uses the remaining  fractional time  to serve far DL users and near UL users. It is worth noting that the effects of SI, CCI and MUI are clearly reduced while the information privacy for far DL users  is preserved (all DL users are allowed
to access and decode their intended signals only). On the other hand,  FD-BS can  effectively perform  transmit beamforming even if the number of DL users exceeds the number of transmit antennas because the number of users that are served at the same time is effectively reduced. There are multiple-antenna Eves that overhear the information signals from both DL and UL channels. We are concerned with the problem of jointly optimizing linear precoders/beamformers at the FD-BS,  UL transmit power allocation and fractional time (FT) to maximize the minimum SR  among all legitimate users subject to  power constraints. In general, such a design problem involves optimization of nonconcave and nonsmooth objective functions subject to nonconvex constraints, for which the optimal solution is computationally difficult to find. Note that SDR cannot be directly applied to such a challenging problem since the optimization problem resulting from the SDR is still highly
nonconvex. This paper is centered on the  inner approximation framework to directly tackle the nonconvexity
of the considered optimization problem. The main contributions of this paper are summarized as follows:
\begin{enumerate}
   \item We propose a new transmission model for FD security to optimize simultaneous DL and UL information  privacy by exploring user grouping-based FT model, which helps manage the  network interference more effectively than aiming to focus the interference at  Eves.
   \item We first assume perfect CSI to realize the potential benefits of our new model, for which a path-following computational procedure is proposed. The core idea behind our approach is to develop a new inner approximation of the nonconvex problem, which guarantees  convergence at least to local optima. The convex program solved at each iteration is of moderate dimension since it does not require rank-constrained optimization,  and thus, its computation is very  efficient.
	\item When only the statistics of CSI (SCSI) for Eves are known, we reformulate the optimization problem by replacing a nonconvex probabilistic constraint with a tractable nonconvex constraint which can be further shaped to have a set of convex constraints.
	\item We determine the optimal solution for a worst-case scenario (WCS) of secure communications, where  Eves adopt a more advanced linear decoder to cancel all multiuser interference.
\item Extensive numerical results show that the proposed algorithms converge quite quickly in a few iterations and greatly improve the SR performance over  existing schemes, i.e., HD, conventional FD and FD-NOMA. It also confirms the robustness of the proposed approach against the significant effects of SI and  DoF bottleneck.
\end{enumerate}


\subsection{Paper Organization and Notation}
The rest of the paper is organized as follows. The system model and problem statement are given in Section~\ref{System Model}.  Path-following algorithms based on a convex approximation for the SR maximization (SRM)  problem with known CSI and statistical CSI of Eves are developed in Section~\ref{sec:knownCSI} and Section~\ref{sec:ECDI}, respectively. Section~\ref{sec:MMSEEavs} is devoted to the computation for the SRM-WCS problem. Numerical results are illustrated in Section~\ref{NumericalResults}, and Section~\ref{Conclusion} concludes the paper.

\emph{Notation}: Lower-case letters, bold lower-case letters and bold upper-case letters represent scalars, vectors and matrices, respectively. $\mathbf{X}^{H}$, $\mathbf{X}^{T}$ and $\mathbf{X}^{*}$   are the Hermitian transpose, normal transpose and conjugate  of a matrix $\mathbf{X}$, respectively. The trace of a matrix $\mathbf{X}$ is denoted by $\tr(\mathbf{X})$. $\|\cdot\|_\F$, $\|\cdot\|$ and $|\cdot|$ denote a matrix's Frobenius  norm,  a vector's Euclidean norm and  absolute value of a complex scalar, respectively. $\mathbf{I}_N$ represents an $N\times N$ identity matrix. $\mathbf{x}\sim\mathcal{CN}(\boldsymbol{\eta},\boldsymbol{Z})$ means that $\mathbf{x}$ is a random vector following a complex circularly symmetric Gaussian distribution with mean  $\boldsymbol{\eta}$ and covariance matrix $\boldsymbol{Z}$. $\mathbb{E}\{\cdot\}$ denotes the statistical expectation. The notation $\mathbf{X}\succeq\mathbf{0}$ ($\mathbf{X}\succ\mathbf{0}$) means that matrix $\mathbf{X}$ is  positive semi-definite (definite).  $\Re\{\cdot\}$ represents the real part of the argument.   $\nabla_{\mathbf{x}}f(\mathbf{x})$ is the gradient of  $f(\mathbf{x})$.

\section{System Model and  Problem Formulation} \label{System Model}


\subsection{Signal Processing Model}
\begin{figure}[t]
\centering
\includegraphics[width=0.49\textwidth,trim={-0cm 0.0cm 0cm 0cm}]{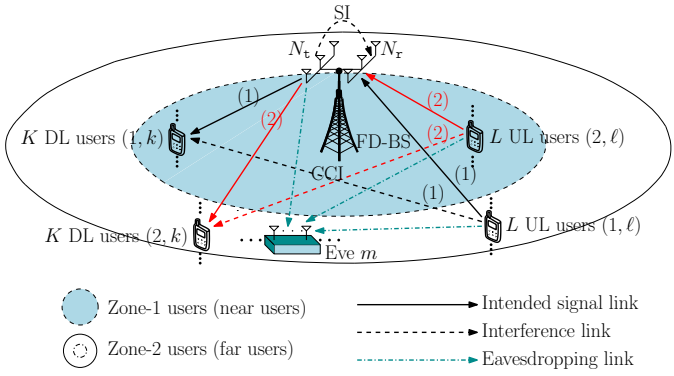}
\caption{A multiuser system model with an FD-BS serving $2K$ DL users and $2L$ UL users in the presence of $M$ Eves.}
\label{fig:SM:1}
\end{figure}

Consider a multiuser communication system illustrated in Fig.~\ref{fig:SM:1}, where the FD-BS is equipped with $N_{\mathtt{t}}$ transmit antennas and $N_{\mathtt{r}}$  receive antennas to simultaneously serve $2K$ DL users and $2L$ UL users, respectively, over the same radio frequency band. Each legitimate user  is equipped with a single antenna and operates in the HD mode to ensure low hardware complexity. Both DL and UL transmissions are overheard by $M$ non-colluding  Eves, where   Eve $m$ has $N_{e,m}$ antennas.\footnote{The scenario can be easily extended to colluding Eves by incorporating $M$ Eves into one with $\sum_{m=1}^MN_{e,m}$ antennas.} Herein, we use the most natural and efficient divisions of the coverage area \cite{DSP16,Nguyen:JSAC:17}. In particular,  users are randomly placed into two zones, such that there are $K$ DL users and $L$ UL users located in a zone nearer the FD-BS (referred to as zone-1 of near users), and $K$ DL users and $L$ UL users are located in a zone farther from the FD-BS (called  zone-2 of far users) \cite{DSP16}.  Note that our proposed
algorithms can be further adjusted to the case of  different  numbers of users located in each zone.  

\begin{figure}[t]
\centering
\includegraphics[width=0.48\textwidth]{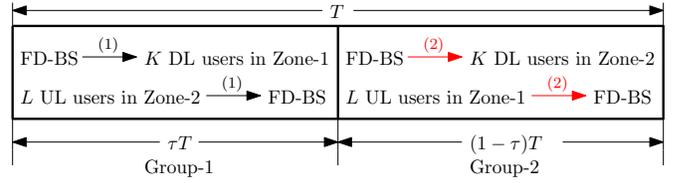}
\caption{Time slot structure to serve $K$ near DL users and $L$ far UL users within the duration  $\tau T$, as well as $K$ far DL users and $L$ near UL users in the remaining duration $(1 - \tau) T$.}
\label{fig:TDMA}
\end{figure}

In this paper, we split each communication time block, denoted by $T$, into two sub-time blocks orthogonally, as shown in Fig.~\ref{fig:TDMA}. As previously mentioned, in order to mitigate the harmful effects of  SI and CCI,   far DL users (near DL users) and  far UL users (near UL users) should be scheduled in a different time slot. The FD-BS serves the DL users (UL users) with similar channel conditions, which helps reduce the MUI. As a consequence, $K$ near DL users and $L$ far UL users are grouped into   group-1 and are served in the first duration  $\tau T\ (0 < \tau < 1)$, while $K$ far DL users and $L$ near UL users are grouped into group-2 and are served in the remaining duration  $(1-\tau)T$.   Although each group still operates in the FD mode, the inter-group interference, i.e., interference across groups 1 and 2, is perfectly eliminated thanks to the FT allocation. This is in contrast with the conventional FD systems which simultaneously serves all UL and DL users. Without loss of generality, the communication time block $T$ is normalized to 1. Upon denoting $\mathcal{K}\triangleq\{1,2,\cdots,K\}$ and $\mathcal{L}\triangleq\{1,2,\cdots,L\}$,  the sets of DL  and UL users are $\mathcal{D}\triangleq \mathcal{I}\times\mathcal{K}$ and $\mathcal{U}\triangleq \mathcal{I}\times\mathcal{L}$ for $\mathcal{I}\triangleq\{1,2\}$, respectively. Thus, the $k$-th DL user and the $\ell$-th UL user in the $i$-th group are referred to as DL user $(i,k)\in\mathcal{D}$ and UL user $(i,\ell)\in\mathcal{U}$, respectively.

\textit{1) Received Signal Model at the FD-BS and DL Users:}
We consider that the FD-BS deploys a transmit beamformer $\mathbf{w}_{i,k}\in\mathbb{C}^{N_{\mathtt{t}}\times 1}$ to transfer the information bearing signal $x_{i,k}$ with $\mathbb{E}\{|x_{i,k}|^2\}=1$  to DL user $(i,k)$. 
Since all UL users have only a single antenna,  they are unable to generate  AN for jamming. To guarantee secure communication in both DL and UL channels,  FD-BS also injects  AN signals to interfere with the reception of the Eves. Hence, the DL transmit signals at the FD-BS intended for $K$ DL users in zone-$i$  can be expressed as
\begin{equation} 
\mathbf{x}_i = \sum_{k=1}^K\mathbf{w}_{i,k}x_{i,k} + \mathbf{v}_i,\ \forall i\in\mathcal{I}
\end{equation}
 where $\mathbf{v}_i\in\mathbb{C}^{N_{\mathtt{t}}\times 1}$ is the AN vector whose elements are zero-mean complex Gaussian random variables, i.e., $\mathbf{v}_i\sim\mathcal{CN}(\mathbf{0},\mathbf{V}_i\mathbf{V}_i^H)$ with $\mathbf{V}_i\in\mathbb{C}^{N_{\mathtt{t}}\times N_{\mathtt{t}}}$.  All channels are assumed to follow frequency-flat fading, which accounts for the effects of both large-scale path loss and small-scale fading. The received signal at DL user $(i,k)$  can be expressed as
\begin{IEEEeqnarray}{rCl}\label{eq:signalDL1k}
y_{i,k} = \mathbf{h}_{i,k}^H\mathbf{w}_{i,k}x_{i,k} + \sum\nolimits_{j=1,j\neq k}^K\mathbf{h}_{i,k}^H\mathbf{w}_{i,j}x_{i,j} \nonumber\\
+\; \mathbf{h}_{i,k}^H\mathbf{v}_{i} + \sum\nolimits_{\ell=1}^Lf_{i,k,\ell}\rho_{i,\ell}\tilde{x}_{i,\ell} + n_{i,k}
\end{IEEEeqnarray}
where $\mathbf{h}_{i,k}\in\mathbb{C}^{N_{\mathtt{t}}\times 1}$ is the transmit channel vector from the FD-BS to DL user $(i,k)$. In \eqref{eq:signalDL1k}, the term $\sum_{\ell=1}^Lf_{i,k,\ell}\rho_{i,\ell}\tilde{x}_{i,\ell}$ represents the CCI from $L$ UL users to DL user $(i,k)$, where $f_{i,k,\ell}\in\mathbb{C}$, $\rho_{i,\ell}$ and $\tilde{x}_{i,\ell}$ with $\mathbb{E}\{|\tilde{x}_{i,\ell}|^2\}=1$ are the complex channel coefficient from UL user $(i,\ell)$ to DL user $(i,k)$, transmit power and message of UL user $(i,\ell)$, respectively.  $n_{i,k}\sim\mathcal{CN}(0,\sigma^2)$  denotes the additive white Gaussian noise (AWGN) at DL user $(i,k)$. By defining $\tau_1:=\tau$ and $\tau_2:=1-\tau$, the information rate decoded by DL user $(i,k)$  in nats/sec/Hz is given by \cite{Nguyen:TCOM:17} 
\begin{IEEEeqnarray}{rCl}\label{eq:RateDLUs}
C_{i,k}^{\mathtt{D}}(\mathbf{X}_i,\tau_i) &=& \tau_i\ln\Bigl(1 + \frac{|\mathbf{h}_{i,k}^H\mathbf{w}_{i,k}|^2}{\varphi_{i,k}(\mathbf{X}_i)}\Bigr)
\end{IEEEeqnarray}
where $\mathbf{X}_i\triangleq  \bigl\{\mathbf{w}_{i},\mathbf{V}_i,\boldsymbol{\rho}_i\bigl\},$ with $\mathbf{w}_{i}\triangleq\{\mathbf{w}_{i,k}\}_{k\in\mathcal{K}}$ and $\boldsymbol{\rho}_i\triangleq\{\rho_{i,\ell}\}_{\ell\in\mathcal{L}}, i=1,2,$  are the matrix encompassing the beamformers/precoders in the DL and transmit power allocation of all users in the UL in the $i$-th group, and 
\begin{IEEEeqnarray}{rCl}\label{eq:RateDLUsDominator}
\varphi_{i,k}(\mathbf{X}_i) \triangleq \sum\nolimits_{j=1,j\neq k}^K|\mathbf{h}_{i,k}^H\mathbf{w}_{i,j}|^2 + \|\mathbf{h}_{i,k}^H\mathbf{V}_{i}\|^2 \nonumber\\
+\; \sum\nolimits_{\ell=1}^L\rho_{i,\ell}^2|f_{i,k,\ell}|^2 + \sigma^2.\nonumber
\end{IEEEeqnarray}

The received signal at  the FD-BS for reception of $L$ UL users in the $i$-th group can be expressed as
\begin{IEEEeqnarray}{rCl}\label{eq:signalUL2p}
\mathbf{y}_{i,bs} =  \sum\nolimits_{\ell=1}^L\rho_{i,\ell}\mathbf{g}_{i,\ell}\tilde{x}_{i,\ell} + \sqrt{\sigma_{\mathtt{SI}}}\sum\nolimits_{k=1}^K\mathbf{G}_{\mathtt{SI}}^H\mathbf{w}_{i,k}x_{i,k}\nonumber\\
 +\; \sqrt{\sigma_{\mathtt{SI}}}\mathbf{G}_{\mathtt{SI}}^H\mathbf{v}_{i}  + \mathbf{n}_{i,bs}\quad
\end{IEEEeqnarray}
where $\mathbf{g}_{i,\ell}\in\mathbb{C}^{N_{\mathtt{r}}\times 1}$ is the receive channel vector from UL user $(i,\ell)$ to the FD-BS. The term $\sqrt{\sigma_{\mathtt{SI}}}\sum_{k=1}^K\mathbf{G}_{\mathtt{SI}}^H\mathbf{w}_{i,k}x_{i,k}$ in (\ref{eq:signalUL2p}) represents the residual SI  after all real-time cancellation in analog and digital domains
\cite{Saetal14}; $\mathbf{G}_{\mathtt{SI}}\in\mathbb{C}^{N_{\mathtt{t}}\times N_{\mathtt{r}}}$ denotes a fading loop channel which impairs the UL signal detection at the FD-BS due to the concurrent DL transmission  and $0\leq\sigma_{\mathtt{SI}}< 1$ is used to model the degree of residual SI propagation \cite{Riihonen-SP-11}. $\mathbf{n}_{i,bs}\sim\mathcal{CN}(\mathbf{0},\sigma^2\mathbf{I}_{N_{\mathtt{r}}})$ denotes the AWGN at  the FD-BS. To maximize the information rates of UL users, we adopt the minimum mean square error and successive interference
cancellation (MMSE-SIC) decoder at  the FD-BS \cite{Tse:book:05}. For $L$ UL users in each group, we assume that the decoding order follows the UL users' index, i.e., $\ell =1,2,\cdots, L$, with the  Foschini ordering. In other words, the strongest signal is decoded first, while weakest signal is decoded last to support the most vulnerable UL users. Hence, the information rate in decoding UL user $(i,\ell)$'s  message is given by \cite{Nguyen:TCOM:17} 
\begin{IEEEeqnarray}{rCl}\label{eq:RateULUs}
C_{i,\ell}^{\mathtt{U}}(\mathbf{X}_i,\tau_i) &=& \tau_i\ln\Bigl(1 + \rho_{i,\ell}^2\mathbf{g}_{i,\ell}^H\boldsymbol{\Phi}_{i,\ell}(\mathbf{X}_i)^{-1}\mathbf{g}_{i,\ell}  \Bigr)
\end{IEEEeqnarray}
where 
\begin{IEEEeqnarray}{rCl}\label{eq:RateULUsDominator}
\boldsymbol{\Phi}_{i,\ell}(\mathbf{X}_i) \triangleq \sum\nolimits_{j>\ell}^L\rho_{i,j}^2\mathbf{g}_{i,j}\mathbf{g}_{i,j}^H + \sigma_{\mathtt{SI}}\sum\nolimits_{k=1}^K\mathbf{G}_{\mathtt{SI}}^H\mathbf{w}_{i,k}\mathbf{w}_{i,k}^H\mathbf{G}_{\mathtt{SI}} \nonumber\\
+\; \sigma_{\mathtt{SI}}\mathbf{G}_{\mathtt{SI}}^H\mathbf{V}_{i}\mathbf{V}_{i}^H\mathbf{G}_{\mathtt{SI}}  + \sigma^2\mathbf{I}_{N_{\mathtt{r}}}.\nonumber
\end{IEEEeqnarray}

\textit{2) Received Signal Model at Eves:} After performing handshaking with the FD-BS, we assume that the Eves are also aware of the FT $\tau_i$.
The information signals of group-$i$ leaked out to the $m$-th Eve during the FT $\tau_i$ can be expressed as\footnote{Note that if the Eves are not aware of the FT $\tau_i$, the  received signals at Eve $m$ in \eqref{eq:Eavsignal} will include  the inter-group interference, which leads to a lower bound on the information rate of the Eves. Such a design would be unfair to the Eves, and therefore, we do not pursue this here. }
\begin{IEEEeqnarray}{rCl}\label{eq:Eavsignal}
\mathbf{y}_{i,m} = \mathbf{H}_m^H\Bigl(\sum\nolimits_{k=1}^K\mathbf{w}_{i,k}x_{i,k}+ \mathbf{v}_i\Bigr) \nonumber\\
+ \sum\nolimits_{\ell=1}^L\rho_{i,\ell}\mathbf{g}_{m,i,\ell}^H\tilde{x}_{i,\ell} + \mathbf{n}_{e,m}
\end{IEEEeqnarray}
where $\mathbf{H}_m\in\mathbb{C}^{N_{\mathtt{t}}\times N_{e,m}}$ and $\mathbf{g}_{m,i,\ell}\in\mathbb{C}^{1\times N_{e,m}}$ are the wiretap channel matrix and vector from the FD-BS and UL user ($i,\ell$) to  the $m$-th Eve, respectively. $\mathbf{n}_{e,m}\sim\mathcal{CN}(\mathbf{0},\sigma^2\mathbf{I}_{N_{e,m}})$ denotes the AWGN at  Eve $m$. The worst-case information (WCI) rates at  the $m$-th Eve, corresponding to the signals targeted for DL user $(i,k)$ and UL user $(i,\ell)$, are given by
\begin{IEEEeqnarray}{rCl}\label{eq:RateEvam}
C_{m,i,k}^{\mathtt{ED}}(\mathbf{X}_i,\tau_i) &=& \tau_i\ln\Bigl(1 + \frac{\|\mathbf{H}_{m}^H\mathbf{w}_{i,k}\|^2}{\psi_{m,i,k}(\mathbf{X}_i)}\Bigr),\IEEEyessubnumber\\
C_{m,i,\ell}^{\mathtt{EU}}(\mathbf{X}_i,\tau_i) &=& \tau_i\ln\Bigl(1 + \frac{\rho_{i,\ell}^2\|\mathbf{g}_{m,i,\ell}^H\|^2}{\chi_{m,i,\ell}(\mathbf{X}_i)}\Bigr)\IEEEyessubnumber
\end{IEEEeqnarray}
respectively, where
\begin{IEEEeqnarray}{rCl}\label{eq:EavDominator}
\psi_{m,i,k}(\mathbf{X}_i) &\triangleq& \sum\nolimits_{j=1,j\neq k}^K\|\mathbf{H}_{m}^H\mathbf{w}_{i,j}\|^2  + \|\mathbf{H}_{m}^H\mathbf{V}_{i}\|_\F^2 \nonumber\\&+& \sum\nolimits_{\ell=1}^L\rho_{i,\ell}^2\|\mathbf{g}_{m,i,\ell}^H\|^2 + N_{e,m}\sigma^2,   \nonumber\\
\chi_{m,i,\ell}(\mathbf{X}_i) &\triangleq& \sum\nolimits_{k=1}^K\|\mathbf{H}_{m}^H\mathbf{w}_{i,k}\|^2 + \|\mathbf{H}_{m}^H\mathbf{V}_{i}\|_\F^2 \nonumber\\ 
 &+& \sum\nolimits_{j=1,j\neq \ell}^L \rho_{i,j}^2\|\mathbf{g}_{m,i,j}^H\|^2 + N_{e,m}\sigma^2.\nonumber
\end{IEEEeqnarray}

\subsection{Optimization Problem Formulation} \label{OptimizationProblem}
The channel of each legitimate user together with $M$ Eves form a compound wiretap channel for which the SR expressions of DL user $(i,k)$ and UL user $(i,\ell)$ can be expressed as \cite{Lian:EUR:09,SunTWC16}
\begin{IEEEeqnarray}{rCl}\label{eq:Secrecyrates}
R_{i,k}^{\mathtt{D}}(\mathbf{X}_i,\tau_i) &\triangleq& \bigl[C_{i,k}^{\mathtt{D}}(\mathbf{X}_i,\tau_i) - \underset{m\in\mathcal{M}}{\max}\,C_{m,i,k}^{\mathtt{ED}}(\mathbf{X}_i,\tau_i)\bigr]^+,\quad\  \IEEEyessubnumber\\
R_{i,\ell}^{\mathtt{U}}(\mathbf{X}_i,\tau_i) &\triangleq& \bigl[C_{i,\ell}^{\mathtt{U}}(\mathbf{X}_i,\tau_i) - \underset{m\in\mathcal{M}}{\max}\,C_{m,i,\ell}^{\mathtt{EU}}(\mathbf{X}_i,\tau_i)\bigl]^+\IEEEyessubnumber
\end{IEEEeqnarray}
respectively, where $\mathcal{M}\triangleq\{1,2,\cdots,M\}$ and $[x]^+\triangleq\max\{0,x\}$.

We aim to jointly optimize  the transmit information vectors, AN matrices $(\{\mathbf{X}_i\}_{i\in\mathcal{I}}$) and  the FT  $(\{\tau_i\}_{i\in\mathcal{I}})$ to maximize the minimum (max-min) SR among all legitimate users. The SRM problem with Eves' WCI rate, referred to as  SRM-EWCI for short, can be mathematically formulated as 
\begin{IEEEeqnarray}{rCl}\label{eq:OP1}
&&\underset{\mathbf{X},\boldsymbol{\tau}}{\maxi}\;\underset{\substack{(i,k)\in\mathcal{D}\\ (i,\ell)\in\mathcal{U}}}{\mini}\;\left\{R_{i,k}^{\mathtt{D}}(\mathbf{X}_i,\tau_i),R_{i,\ell}^{\mathtt{U}}(\mathbf{X}_i,\tau_i)\right\} \IEEEyessubnumber\label{eq:OP1:a}\\
&&\st\   \sum_{i=1}^2\tau_i\Bigl(\sum_{k=1}^K\|\mathbf{w}_{i,k}\|^2 + \|\mathbf{V}_i\|_\F^2\Bigr) \leq  P_{bs}^{\max},   \IEEEyessubnumber\label{eq:OP1:b} \\
&&\qquad \tau_1\rho_{1,\ell}^2 \leq P_{1,\ell}^{\max},\;\forall \ell\in\mathcal{L},  \IEEEyessubnumber\label{eq:OP1:c}\\
&&\qquad \tau_2 \rho_{2,\ell}^2 \leq P_{2,\ell}^{\max},\;\forall \ell\in\mathcal{L},  \IEEEyessubnumber\label{eq:OP1:d}\\
&&\qquad \rho_{1,\ell} \geq 0, \rho_{2,\ell} \geq 0,\,\forall \ell\in\mathcal{L}, \IEEEyessubnumber\label{eq:OP1:e}\\
&&\qquad \tau_1>0, \tau_2>0, \tau_1 + \tau_2 \leq 1\IEEEyessubnumber\label{eq:OP1:f}
\end{IEEEeqnarray}
where $\mathbf{X} \triangleq \{\mathbf{X}_1, \mathbf{X}_2\}$ and $\boldsymbol{\tau}\triangleq\{\tau_1,\tau_2\}$. Constraint \eqref{eq:OP1:b} means that the total transmit power at the FD-BS, which is allocated across different time fractions, does not exceed the  power budget, $P_{bs}^{\max}$, while  constraints \eqref{eq:OP1:c} and \eqref{eq:OP1:d} are  individual transmit power at UL user ($i,\ell$) in its service time, with $P_{i,\ell}^{\max}$ being the power budget (see e.g., \cite{Nguyen:TCOM:17,Nguyen:Access:17,Yang:TVT:17} for these realistic power constraints). Finding an optimal solution to the SRM-EWCI problem \eqref{eq:OP1} is challenging because the objective \eqref{eq:OP1:a} is  nonconcave and  constraints \eqref{eq:OP1:b}-\eqref{eq:OP1:d} are also nonconvex due to coupling between $\mathbf{X}$ and $\boldsymbol{\tau}$. 

\begin{remark}\label{Remark1} In a practical scenario, the DL and UL traffic demands in current generation wireless networks are typically asymmetric.
Another optimization problem of interest is to maximize the minimum SR of DL users subject to the SR constraints of UL users as follows:
\begin{IEEEeqnarray}{rCl}\label{eq:OP1a}
\underset{\mathbf{X},\boldsymbol{\tau}}{\maxi}&&\;\underset{(i,k)\in\mathcal{D}}{\mini}\;\left\{R_{i,k}^{\mathtt{D}}(\mathbf{X}_i,\tau_i)\right\} \IEEEyessubnumber\label{eq:OP1a:a}\\
\st\ && \eqref{eq:OP1:b}-\eqref{eq:OP1:f},\IEEEyessubnumber\label{eq:OP1a:c}\\
&&  R_{i,\ell}^{\mathtt{U}}(\mathbf{X}_i,\tau_i) \geq \bar{\mathtt{R}}_{i,\ell}^{\mathtt{U}},  \forall(i,\ell)\in\mathcal{U} \IEEEyessubnumber\label{eq:OP1a:b}
\end{IEEEeqnarray}
where the QoS constraints in \eqref{eq:OP1a:b} set a minimum SR requirement $\bar{\mathtt{R}}_{i,\ell}^{\mathtt{U}}$ at  UL user $(i,\ell)$.  It should be emphasized that the systematic approach in this paper is expected to
be applicable for such a problem (this will be elaborated in Section \ref{NumericalResults}).
\end{remark}

\section{Proposed Method with Known CSI}\label{sec:knownCSI}

In this section, the CSI of the  users (including Eves)   is assumed to be perfectly known at the transmitters. Channel reciprocity of  UL and DL channels in  time  division duplex (TDD) mode can be adopted for small cell systems
as those considered in this paper. The channels for all users (with a low degree of mobility) can be acquired at the FD-BS by requesting them to send  pilot signals to the FD-BS, and thus  these estimated channels can be assumed to be perfectly available \cite{Nguyen:TCOM:17,Dan:TWC:14}.
Likewise, the CSI between an  UL user and the receivers (DL users and Eves) can be estimated through TDD since any transmitted signal includes short-training and long-training sequences (e.g., a part of preamble). This way, any UL user can overhear and estimate the channels from DL users and Eves, and then these estimated channels can be acquired  at the  FD-BS  by polling each UL user.  After  CSI acquisition, we assume that only $2K$ DL users and $2L$ UL users are scheduled to be simultaneously served as in IEEE 802.11ac. Herein, $M$ unscheduled users are not necessarily malicious, but are untrusted users. Thus, $M$  unscheduled users are treated as eavesdroppers but with perfectly known CSI.

\subsection{Equivalent Transformations for \eqref{eq:OP1}} 
To solve the max-min SR problem in \eqref{eq:OP1}, we  present a path-following algorithm under which each iteration invokes only a simple convex program of low computational complexity. Toward a tractable form, several proper transformations need to be invoked. Let us start by expressing  \eqref{eq:OP1} equivalently as
\begin{IEEEeqnarray}{rCl}\label{eq:OP2}
\underset{\mathbf{X},\boldsymbol{\tau},\eta}{\maxi}&&\quad \eta \IEEEyessubnumber\label{eq:OP2:a}\\
\st\;&&\eqref{eq:OP1:b}, \eqref{eq:OP1:c}, \eqref{eq:OP1:d}, \eqref{eq:OP1:e}, \eqref{eq:OP1:f}\IEEEyessubnumber\label{eq:OP2:f},\\
&&  R_{i,k}^{\mathtt{D}}(\mathbf{X}_i,\tau_i) \geq \eta,\ \forall(i,k)\in\mathcal{D},   \IEEEyessubnumber\label{eq:OP2:b}\\
&& R_{i,\ell}^{\mathtt{U}}(\mathbf{X}_i,\tau_i) \geq \eta,\ \forall(i,\ell)\in\mathcal{U}  \IEEEyessubnumber\label{eq:OP2:d}.
\end{IEEEeqnarray}
where $\eta$ is an additional variable  to achieve the SR fairness among all DL and UL users. Note that the equivalence between \eqref{eq:OP1} and \eqref{eq:OP2} can be readily verified by checking that  constraints \eqref{eq:OP2:b}-\eqref{eq:OP2:d} must hold with equality at optimum. We now provide a sketch of the proof to verify \eqref{eq:OP2:b}, and  other constraints  follow immediately.
Suppose that $R_{i,k}^{\mathtt{D}}(\mathbf{X}_i,\tau_i) > \eta$ for some $(i,k)$.  Then, there may exist a positive constant $\Delta \eta > 0$ to satisfy $R_{i,k}^{\mathtt{D}}(\mathbf{X}_i,\tau_i) = \eta + \Delta \eta$. As a consequence, $\eta + \Delta \eta$ is also feasible for \eqref{eq:OP2} but yielding a strictly larger objective, and thus, this is a contradiction with the optimality assumption. 
By observing that the objective function \eqref{eq:OP2:a} is monotonic in its argument, the main difficulty in solving \eqref{eq:OP2} is due to the nonconvex constraints \eqref{eq:OP2:b} and \eqref{eq:OP2:d}. To provide a minorant of the SR, we further rewrite \eqref{eq:OP2} as follows: 
\begin{IEEEeqnarray}{rCl}\label{eq:OP3}
\underset{\mathbf{X},\boldsymbol{\tau},\eta,\boldsymbol{\Gamma}}{\maxi}&&\quad \eta \IEEEyessubnumber\label{eq:OP3:a}\\
\st\;&& \eqref{eq:OP1:b}, \eqref{eq:OP1:c}, \eqref{eq:OP1:d}, \eqref{eq:OP1:e}, \eqref{eq:OP1:f}\IEEEyessubnumber\label{eq:OP3:j},\\
&& C_{i,k}^{\mathtt{D}}(\mathbf{X}_i,\tau_i) - \Gamma_{i,k}^{\mathtt{D}}\geq \eta,\ \forall(i,k)\in\mathcal{D},   \IEEEyessubnumber\label{eq:OP3:b}\\
&&C_{m,i,k}^{\mathtt{ED}}(\mathbf{X}_i,\tau_i) \leq \Gamma_{i,k}^{\mathtt{D}},\ \forall m\in\mathcal{M}, (i,k)\in\mathcal{D},\IEEEyessubnumber\label{eq:OP3:c}\qquad\\
&&  C_{i,\ell}^{\mathtt{U}}(\mathbf{X}_i,\tau_i) - \Gamma_{i,\ell}^{\mathtt{U}}\geq \eta,\ \forall(i,\ell)\in\mathcal{U},   \IEEEyessubnumber\label{eq:OP3:f}\\
&&C_{m,i,\ell}^{\mathtt{EU}}(\mathbf{X}_i,\tau_i) \leq \Gamma_{i,\ell}^{\mathtt{U}},\ \forall m\in\mathcal{M}, (i,\ell)\in\mathcal{U}\IEEEyessubnumber\label{eq:OP3:g}
\end{IEEEeqnarray}
where $\boldsymbol{\Gamma}\triangleq\bigr\{\Gamma_{i,k}^{\mathtt{D}},\Gamma_{i,\ell}^{\mathtt{U}}\bigr\}_{i\in\mathcal{I},k\in\mathcal{K},\ell\in\mathcal{L}}$ are newly introduced variables to  tackle the maximum allowable rate of the Eves \cite{Lui:SP:14}. 
 
However,  problem \eqref{eq:OP3}  still remains intractable since it is not amendable to a direct application of the inner approximation method. To this end, we make the variable change:
\begin{equation}\label{eq:changevariables:a}
  \tau_1 = \frac{1}{\alpha_1}\ \text{and}\  \tau_2 = \frac{1}{\alpha_2}
\end{equation}
to equivalently rewrite \eqref{eq:OP1:f} by the following convex constraint
\begin{equation}\label{eq:changevariables:b}
  \frac{1}{\alpha_1} +  \frac{1}{\alpha_2} \leq 1, \forall \alpha_i > 1, i\in\mathcal{I}
\end{equation}
where $\boldsymbol{\alpha}\triangleq\{\alpha_1,\alpha_2\}$ are new variables. Using \eqref{eq:changevariables:a}, constraints \eqref{eq:OP3:b} and \eqref{eq:OP3:f} become
\begin{IEEEeqnarray}{rCl}\label{eq:LegitimateUEs}
&&C_{i,k}^{\mathtt{D}}(\mathbf{X}_i,\alpha_i) \triangleq \frac{\ln\Bigl(1 + \frac{|\mathbf{h}_{i,k}^H\mathbf{w}_{i,k}|^2}{\varphi_{i,k}(\mathbf{X}_i)}\Bigr)}{\alpha_i} \geq \eta + \Gamma_{i,k}^{\mathtt{D}},\forall i,k,\IEEEyessubnumber\label{eq:changeOP3:a}\\
&&C_{i,\ell}^{\mathtt{U}}(\mathbf{X}_i,\alpha_i) \triangleq \nonumber\\
&&\qquad\frac{\ln\Bigl(1 + \rho_{i,\ell}^2\mathbf{g}_{i,\ell}^H\boldsymbol{\Phi}_{i,\ell}(\mathbf{X}_i)^{-1}\mathbf{g}_{i,\ell}  \Bigr)}{\alpha_i}\geq \eta + \Gamma_{i,\ell}^{\mathtt{U}}, \forall i,\ell.\IEEEyessubnumber\label{eq:changeOP3:c}\qquad\,
\end{IEEEeqnarray}
Analogously,  constraints \eqref{eq:OP3:c} and \eqref{eq:OP3:g}  become
\begin{IEEEeqnarray}{rCl}\label{eq:RateEvamChange}
&&C_{m,i,k}^{\mathtt{ED}}(\mathbf{X}_i,\alpha_i) \triangleq \frac{\ln\Bigl(1 + \frac{\|\mathbf{H}_{m}^H\mathbf{w}_{i,k}\|^2}{\psi_{m,i,k}(\mathbf{X}_i)}\Bigr)}{\alpha_i} \leq \Gamma_{i,k}^{\mathtt{D}}, \forall m,i,k,\; \IEEEyessubnumber\label{eq:RateEvamChange:a}\qquad\\
&&C_{m,i,\ell}^{\mathtt{EU}}(\mathbf{X}_i,\alpha_i) \triangleq
 \frac{\ln\Bigl(1 + \frac{\rho_{i,\ell}^2\|\mathbf{g}_{m,i,\ell}^H\|^2}{\chi_{m,i,\ell}(\mathbf{X}_i)}\Bigr)}{\alpha_i}\leq \Gamma_{i,\ell}^{\mathtt{U}}, \forall m,i,\ell.\IEEEyessubnumber\label{eq:RateEvamChange:c}
\end{IEEEeqnarray}

From \eqref{eq:LegitimateUEs} and \eqref{eq:RateEvamChange}, and substituting \eqref{eq:changevariables:a} and \eqref{eq:changevariables:b} into \eqref{eq:OP1:b}-\eqref{eq:OP1:d}, the optimization problem \eqref{eq:OP3} is re-expressed as
\begin{IEEEeqnarray}{rCl}\label{eq:OP4}
\underset{\mathbf{X},\eta,\boldsymbol{\Gamma},\boldsymbol{\alpha}}{\maxi}&&\quad \eta \IEEEyessubnumber\label{eq:OP4:a}\\
\st\;&&  \eqref{eq:OP1:e}, \eqref{eq:changevariables:b}, \eqref{eq:LegitimateUEs}, \eqref{eq:RateEvamChange},  \IEEEyessubnumber\label{eq:OP4:b}\\
&& \Bigl(1-\frac{1}{\alpha_2}\Bigr)\Bigl(\sum_{k=1}^K\|\mathbf{w}_{1,k}\|^2 + \|\mathbf{V}_1\|_\F^2\Bigr) \nonumber\\
&&\  +\; \frac{1}{\alpha_2}\Bigl(\sum_{k=1}^K\|\mathbf{w}_{2,k}\|^2 + \|\mathbf{V}_2\|_\F^2\Bigr) \leq  P_{bs}^{\max},   \IEEEyessubnumber\label{eq:OP4:c}\qquad\\
&& \Bigl(1-\frac{1}{\alpha_2}\Bigl)\rho_{1,\ell}^2 \leq P_{1,\ell}^{\max},\;\forall \ell\in\mathcal{L},  \IEEEyessubnumber\label{eq:OP4:d}\\
&& \frac{1}{\alpha_2}\rho_{2,\ell}^2 \leq P_{2,\ell}^{\max},\;\forall \ell\in\mathcal{L}.  \IEEEyessubnumber\label{eq:OP4:e}
\end{IEEEeqnarray}
Notice that $1/\alpha_1 + 1/\alpha_2 =1$ must hold at optimum, which means that the optimization problem \eqref{eq:OP4} is equivalent to \eqref{eq:OP3}, and the optimal solution for $\boldsymbol{\tau}$ is recovered by \eqref{eq:changevariables:a}.

\subsection{Proposed Convex Approximation-Based Path-Following Method}
We are now in a position to approximate the equivalent formulation in \eqref{eq:OP4}. Note that except for \eqref{eq:OP1:e}, \eqref{eq:changevariables:b} and \eqref{eq:OP4:e}, the rest of the constraints are  nonconvex. The proposed algorithm is mainly based on an inner approximation framework \cite{Marks:78} under which the nonconvex parts are completely exposed.

\textit{Approximation of  Constraints \eqref{eq:LegitimateUEs}:} 
 To develop a convex approximation, we first introduce the following approximation of  function $\zeta(\gamma,t)\triangleq\ln(1+\gamma)/t$ at a feasible point $(\gamma^{(\kappa)},t^{(\kappa)})$:
\begin{eqnarray}\label{eq:ineupper}
\ds\zeta(\gamma,t)&\geq&\ds \mathtt{A}^{(\kappa)} - \mathtt{B}^{(\kappa)}\frac{1}{\gamma} - \mathtt{C}^{(\kappa)}t\label{inq1},\nonumber\\
&&\forall \gamma > 0, \gamma^{(\kappa)} > 0, t>0, t^{(\kappa)}>0
\end{eqnarray}
where
\begin{IEEEeqnarray}{rCl} 
\mathtt{A}^{(\kappa)}&\triangleq& 2\zeta(\gamma^{(\kappa)},t^{(\kappa)}) + \frac{\gamma^{(\kappa)}}{t^{(\kappa)}(\gamma^{(\kappa)}+1)}, \nonumber\\
\mathtt{B}^{(\kappa)}&\triangleq&\frac{(\gamma^{(\kappa)})^2}{t^{(\kappa)}(\gamma^{(\kappa)}+1)},\
\mathtt{C}^{(\kappa)}\triangleq\frac{\zeta(\gamma^{(\kappa)},t^{(\kappa)})}{t^{(\kappa)}}.\nonumber
 \end{IEEEeqnarray}
 The proof of \eqref{eq:ineupper} is given in  Appendix A.   In the spirit of \cite{WES06}, for $\bar{\mathbf{w}}_{i,k}=e^{-j\mathtt{arg}(\mathbf{h}_{i,k}^H\mathbf{w}_{i,k})}\mathbf{w}_{i,k}$ with $j=\sqrt{-1}$, it follows that $|\mathbf{h}_{i,k}^H\mathbf{w}_{i,k}|=\mathbf{h}_{i,k}^H\bar{\mathbf{w}}_{i,k}=\Re\{\mathbf{h}_{i,k}^H\bar{\mathbf{w}}_{i,k}\}\geq 0$ and $|\mathbf{h}_{i',k'}^H\mathbf{w}_{i,k}|=|\mathbf{h}_{i',k'}^H\bar{\mathbf{w}}_{i,k}|$ for all $(i',k')\neq(i,k)$. Thus, $\gamma_{i,k}^{\mathtt{D}}(\mathbf{X}_i)\triangleq|\mathbf{h}_{i,k}^H\mathbf{w}_{i,k}|^2/\varphi_{i,k}(\mathbf{X}_i)$ can be equivalently replaced by
\begin{equation}
\gamma_{i,k}^{\mathtt{D}}(\mathbf{X}_i)=\frac{\bigr(\Re\{\mathbf{h}_{i,k}^H\mathbf{w}_{i,k}\}\bigl)^2}{\varphi_{i,k}(\mathbf{X}_i)},\ \forall(i,k)\in\mathcal{D}
\end{equation}
with the condition 
\begin{IEEEeqnarray}{rCl}\label{eq:poscondi}
\Re\{\mathbf{h}_{i,k}^H\mathbf{w}_{i,k}\}\geq 0,\ \forall (i,k)\in\mathcal{D}. 
 \end{IEEEeqnarray}
By using \eqref{eq:ineupper}, at  a feasible point $(\mathbf{X}_i^{(\kappa)},\alpha_i^{(\kappa)})$  found at the $(\kappa$-1)-th iteration $C_{i,k}^{\mathtt{D}}(\mathbf{X}_i,\alpha_i)$ in \eqref{eq:changeOP3:a} is lower bounded by
\begin{IEEEeqnarray}{rCl}
  \frac{\ln\bigl(1+ \gamma_{i,k}^{\mathtt{D}}(\mathbf{X}_i)  \bigr)}{\alpha_i} &\geq& \mathtt{A}_{i,k}^{(\kappa)} - \mathtt{B}_{i,k}^{(\kappa)}\frac{\varphi_{i,k}(\mathbf{X}_i)}{\bigr(\Re\{\mathbf{h}_{i,k}^H\mathbf{w}_{i,k}\}\bigl)^2}- \mathtt{C}_{i,k}^{(\kappa)}\alpha_i \qquad
	\label{eq:Rate1kappro}
 \end{IEEEeqnarray}
where
\begin{IEEEeqnarray}{rCl}
\mathtt{A}_{i,k}^{(\kappa)} &\triangleq& 2\frac{\ln\bigl(1+\gamma_{i,k}^{\mathtt{D}}(\mathbf{X}_i^{(\kappa)})\bigr)}{\alpha_i^{(\kappa)}} + \frac{\gamma_{i,k}^{\mathtt{D}}(\mathbf{X}_i^{(\kappa)})}{\alpha_i^{(\kappa)}\bigl(\gamma_{i,k}^{\mathtt{D}}(\mathbf{X}_i^{(\kappa)}) +1\bigr)},\nonumber\\
\mathtt{B}_{i,k}^{(\kappa)} &\triangleq& \frac{\bigl(\gamma_{i,k}^{\mathtt{D}}(\mathbf{X}_i^{(\kappa)})\bigr)^2}{\alpha_i^{(\kappa)}\bigl(\gamma_{i,k}^{\mathtt{D}}(\mathbf{X}_i^{(\kappa)})+1\bigr)},
\mathtt{C}_{i,k}^{(\kappa)}\triangleq \frac{\ln\bigl(1+\gamma_{i,k}^{\mathtt{D}}(\mathbf{X}_i^{(\kappa)})\bigr)}{(\alpha_i^{(\kappa)})^2}.\nonumber\qquad
\end{IEEEeqnarray}
We make use of the following inequality 
\begin{eqnarray}
\|\mathbf{x}\|^2 &\geq& 2\Re\{(\mathbf{x}^{(\kappa)})^H\mathbf{x}\} - \|\mathbf{x}^{(\kappa)}\|^2\quad\label{eq:inequad}
\end{eqnarray}
with $\forall \mathbf{x}\in\mathbb{C}^N, \mathbf{x}^{(\kappa)}\in\mathbb{C}^N,$
 due to the convexity of the function $\|\mathbf{x}\|^2$ to further expose the hidden convexity of the right-hand side (RHS) of \eqref{eq:Rate1kappro} as
\begin{IEEEeqnarray}{rCl}
  \frac{\ln\bigl(1+ \gamma_{i,k}^{\mathtt{D}}(\mathbf{X}_i)  \bigr)}{\alpha_i} &\geq& \mathtt{A}_{i,k}^{(\kappa)} - \mathtt{B}_{i,k}^{(\kappa)}\frac{\varphi_{i,k}(\mathbf{X}_i)}{\theta_{i,k}^{(\kappa)}(\mathbf{w}_{i,k})}
	- \mathtt{C}_{i,k}^{(\kappa)}\alpha_i \nonumber\\
	&:=&C_{i,k}^{\mathtt{D},(\kappa)}(\mathbf{X}_i,\alpha_i)
	\label{eq:Rate1ConvexAppro}
 \end{IEEEeqnarray}
 over the trust region
\begin{equation}\label{eq:R1ktrust}
2\Re\{\mathbf{h}_{i,k}^H\mathbf{w}_{i,k}\}-\Re\{\mathbf{h}_{i,k}^H\mathbf{w}_{i,k}^{(\kappa)}\} > 0,\ \forall(i,k)\in\mathcal{D}
\end{equation}
where 
\[\theta_{i,k}^{(\kappa)}(\mathbf{w}_{i,k})\triangleq \Re\{\mathbf{h}_{i,k}^H\mathbf{w}_{i,k}^{(\kappa)}\}\Bigr(2\Re\{\mathbf{h}_{i,k}^H\mathbf{w}_{i,k}\}-\Re\{\mathbf{h}_{i,k}^H\mathbf{w}_{i,k}^{(\kappa)}\}\Bigl).\]
 Note that $C_{i,k}^{\mathtt{D},(\kappa)}(\mathbf{X}_i,\alpha_i)$ is a lower bounding concave function of $C_{i,k}^{\mathtt{D}}(\mathbf{X}_i,\alpha_i)$, which also satisfies
\begin{equation}\label{eq:R1kAgree}
C_{i,k}^{\mathtt{D},(\kappa)}\bigl(\mathbf{X}_i^{(\kappa)},\alpha_i^{(\kappa)}\bigl) = \frac{1}{\alpha_i^{(\kappa)}}\ln\bigl(1 + \gamma_{i,k}^{\mathtt{D}}(\mathbf{X}_i^{(\kappa)})\bigr).
\end{equation}
As a result,  \eqref{eq:changeOP3:a} can be iteratively replaced by the following inequality:
\begin{IEEEeqnarray}{rCl}\label{eq:R1kConvex}
C_{i,k}^{\mathtt{D},(\kappa)}(\mathbf{X}_i,\alpha_i) \geq \eta + \Gamma_{i,k}^{\mathtt{D}},\ \forall(i,k)\in\mathcal{D}.
\end{IEEEeqnarray}

By defining  $\gamma_{i,\ell}^{\mathtt{U}}(\mathbf{X}_i)\triangleq \rho_{i,\ell}^2\mathbf{g}_{i,\ell}^H\boldsymbol{\Phi}_{i,\ell}(\mathbf{X}_i)^{-1}\mathbf{g}_{i,\ell}$, the left-hand side (LHS) of \eqref{eq:changeOP3:c} is lower bounded at the feasible point $\bigl(\mathbf{X}_i^{(\kappa)},\alpha_i^{(\kappa)}\bigr)$ as
\begin{IEEEeqnarray}{rCl}
   \frac{\ln\bigl(1 + \gamma_{i,\ell}^{\mathtt{U}}(\mathbf{X}_i)  \bigr)}{\alpha_i} &\geq& \tilde{\mathtt{A}}_{i,\ell}^{(\kappa)} + \tilde{\mathtt{B}}_{i,\ell}^{(\kappa)}\rho_{i,\ell} - \frac{\phi_{i,\ell}^{(\kappa)}\bigl(\mathbf{X}_i\bigr)}{\alpha_i^{(\kappa)}} - \tilde{\mathtt{C}}_{i,\ell}^{(\kappa)}\alpha_i \nonumber\\
	&:=& C_{i,\ell}^{\mathtt{U},(\kappa)}(\mathbf{X}_i,\alpha_i)
	\label{eq:Rate1qappro}
 \end{IEEEeqnarray}
where 
\begin{IEEEeqnarray}{rCl}
&&\tilde{\mathtt{A}}_{i,\ell}^{(\kappa)} \triangleq  \frac{2\ln\bigl(1+\gamma_{i,\ell}^{\mathtt{U}}(\mathbf{X}_i^{(\kappa)})\bigr) - \gamma_{i,\ell}^{\mathtt{U}}(\mathbf{X}_i^{(\kappa)})}{\alpha_i^{(\kappa)}},\ \nonumber\\
&&\tilde{\mathtt{B}}_{i,\ell}^{(\kappa)} \triangleq \frac{2\gamma_{i,\ell}^{\mathtt{U}}(\mathbf{X}_i^{(\kappa)})}{\rho_{i,\ell}^{(\kappa)}\alpha_i^{(\kappa)}},\   \tilde{\mathtt{C}}_{i,\ell}^{(\kappa)} \triangleq   \frac{\ln\bigl(1+\gamma_{i,\ell}^{\mathtt{U}}(\mathbf{X}_i^{(\kappa)})\bigr)}{\bigl(\alpha_i^{(\kappa)}\bigr)^2},                          \nonumber\\
&&\phi_{i,\ell}^{(\kappa)}\bigl(\mathbf{X}_i\bigr) \triangleq \sum_{j=\ell}^L\rho_{i,j}^2\mathbf{g}_{i,j}^H\boldsymbol{\Omega}_{i,\ell}^{(\kappa)}\mathbf{g}_{i,j}+\sigma^2\tr\bigl(\boldsymbol{\Omega}_{i,\ell}^{(\kappa)}\bigr)\;+ \nonumber\\
&&\sigma_{\mathtt{SI}}\tr\bigl(\mathbf{V}_{i}^H\mathbf{G}_{\mathtt{SI}}\boldsymbol{\Omega}_{i,\ell}^{(\kappa)}\mathbf{G}_{\mathtt{SI}}^H\mathbf{V}_{i}\bigr)+ \sigma_{\mathtt{SI}}\sum_{k=1}^K\mathbf{w}_{i,k}^H\mathbf{G}_{\mathtt{SI}}\boldsymbol{\Omega}_{i,\ell}^{(\kappa)}\mathbf{G}_{\mathtt{SI}}^H\mathbf{w}_{i,k}, \nonumber\\
 &&\boldsymbol{\Omega}_{i,\ell}^{(\kappa)} \triangleq \boldsymbol{\Phi}_{i,\ell}\bigl(\mathbf{X}_i^{(\kappa)}\bigr)^{-1} - \boldsymbol{\Phi}_{i,\ell-1}\bigl(\mathbf{X}_i^{(\kappa)}\bigr)^{-1}\succeq\mathbf{0}. \nonumber\qquad
\end{IEEEeqnarray}
It follows from \eqref{eq:Rate1qappro} that $C_{i,\ell}^{\mathtt{U},(\kappa)}(\mathbf{X}_i,\alpha_i)$ is a concave  function, which agrees with $C_{i,\ell}^{\mathtt{U}}(\mathbf{X}_i,\alpha_i)$ at the feasible point $\bigl(\mathbf{X}_i^{(\kappa)},\alpha_i^{(\kappa)}\bigr)$ as
\begin{IEEEeqnarray}{rCl}\label{eq:R1qAgree}
C_{i,\ell}^{\mathtt{U},(\kappa)}\bigl(\mathbf{X}_i^{(\kappa)},\alpha_i^{(\kappa)}\bigr) = 
\frac{1}{\alpha_i^{(\kappa)}}\ln\bigl(1 + \gamma_{i,\ell}^{\mathtt{U}}(\mathbf{X}_i^{(\kappa)}) \bigr).
\end{IEEEeqnarray}
Thus,  constraint \eqref{eq:changeOP3:c} can be iteratively 
replaced by
\begin{equation}\label{eq:R1qConvexappro}
C_{i,\ell}^{\mathtt{U},(\kappa)}\bigl(\mathbf{X}_i,\alpha_i\bigr) \geq \eta + \Gamma_{i,\ell}^{\mathtt{U}},\ \forall(i,\ell)\in\mathcal{U}.
\end{equation}

\textit{Approximation of  Constraints \eqref{eq:RateEvamChange}:} Notice that the LHSs of \eqref{eq:RateEvamChange} are neither convex nor concave with respect to (w.r.t.) $(\mathbf{X}_i,\alpha_i)$. For a given feasible point $x^{(\kappa)}$, the following inequality holds true:
\begin{equation}\label{ineupp}
\ln(1+x) \leq \mathtt{a}(x^{(\kappa)}) + \mathtt{b}(x^{(\kappa)})x,\;\forall x^{(\kappa)} \geq 0, x\geq 0
\end{equation}
where 
\begin{IEEEeqnarray}{rCl}
\mathtt{a}(x^{(\kappa)}) \triangleq \ln(1+x^{(\kappa)}) - \frac{x^{(\kappa)}}{1+x^{(\kappa)}},\
 \mathtt{b}(x^{(\kappa)}) \triangleq  \frac{1}{1+x^{(\kappa)}},\nonumber
	\end{IEEEeqnarray} 
which is a result of the concavity of the function $\ln(1+x)$. In the sequel, we develop an upper bound of the LHSs of \eqref{eq:RateEvamChange}. Let us consider \eqref{eq:RateEvamChange:a} first. For $\gamma_{m,i,k}^{\mathtt{ED},(\kappa)}\triangleq \|\mathbf{H}_{m}^H\mathbf{w}_{i,k}^{(\kappa)}\|^2/\psi_{m,i,k}(\mathbf{X}_i^{(\kappa)})$ at a feasible point $\mathbf{X}_i^{(\kappa)}$ and using \eqref{ineupp}, its LHS is upper bounded by
\begin{equation}\label{eq:RateEvamUpper:a}
\frac{\ln\Bigl(1 + \frac{\|\mathbf{H}_{m}^H\mathbf{w}_{i,k}\|^2}{\psi_{m,i,k}(\mathbf{X}_i)}\Bigr)}{\alpha_i} \leq \frac{\mathtt{a}(\gamma_{m,i,k}^{\mathtt{ED},(\kappa)})}{\alpha_i} + \mathtt{b}(\gamma_{m,i,k}^{\mathtt{ED},(\kappa)})\frac{\|\mathbf{H}_{m}^H\mathbf{w}_{i,k}\|^2}{\alpha_i\psi_{m,i,k}(\mathbf{X}_i)}.\qquad
\end{equation}
 We then introduce  new variables $\mu_{m,i,k} > 0, \forall m, i, k$, to decompose \eqref{eq:RateEvamChange:a} into a set of the following constraints:
\begin{IEEEeqnarray}{rCl}\label{eq:RateEvamUpper:a1}
&&C_{m,i,k}^{\mathtt{ED},(\kappa)}(\mathbf{X}_i,\alpha_i,\mu_{m,i,k}) \leq \Gamma_{i,k}^{\mathtt{D}}, \forall m,i,k,\IEEEyessubnumber\label{eq:RateEvamUpper:a1a}\qquad\\
&& \mu_{m,i,k}  \leq \alpha_i\psi_{m,i,k}(\mathbf{X}_i), \forall m,i,k\IEEEyessubnumber\label{eq:RateEvamUpper:a1b}
	\end{IEEEeqnarray}
where \eqref{eq:RateEvamUpper:a1a} is a convex constraint, and
\begin{IEEEeqnarray}{rCl}\label{eq:RateEvamUpper:a3}
C_{m,i,k}^{\mathtt{ED},(\kappa)}(\mathbf{X}_i,\alpha_i,\mu_{m,i,k}) &:=&\frac{\mathtt{a}(\gamma_{m,i,k}^{\mathtt{ED},(\kappa)})}{\alpha_i} \nonumber\\
 &+& \mathtt{b}(\gamma_{m,i,k}^{\mathtt{ED},(\kappa)}) \frac{\|\mathbf{H}_{m}^H\mathbf{w}_{i,k}\|^2}{\mu_{m,i,k}}.
\nonumber\end{IEEEeqnarray}
For the nonconvex constraint \eqref{eq:RateEvamUpper:a1b}, we can express it in more tractable form as
\begin{IEEEeqnarray}{rCl}\label{eq:RateEvamUpper:a4}
 \mathtt{F}_{m,i,k}(\mathbf{X}_i,\alpha_i,\mu_{m,i,k})  \leq    N_{e,m}\sigma^2
\end{IEEEeqnarray}
where 
\begin{IEEEeqnarray}{rCl}
\mathtt{F}_{m,i,k}(\mathbf{X}_i,\alpha_i,\mu_{m,i,k}) \triangleq \frac{\mu_{m,i,k}}{\alpha_i}-\Bigl[\psi_{m,i,k}(\mathbf{X}_i)-N_{e,m}\sigma^2\Bigr]. \nonumber
\end{IEEEeqnarray}
Let us define $\mathtt{F}_{m,i,k}^{(\kappa)}(\mathbf{X}_i,\alpha_i,\mu_{m,i,k})$ to be the convex approximation of $\mathtt{F}_{m,i,k}(\mathbf{X}_i,\alpha_i,\mu_{m,i,k})$ at $(\mathbf{X}_i^{(\kappa)},\alpha_i^{(\kappa)},\mu_{m,i,k}^{(\kappa)})$, of which the derivation is given in Appendix B. As a consequence, the convex approximation of  \eqref{eq:RateEvamUpper:a4} reads as
\begin{IEEEeqnarray}{rCl}\label{eq:RateEvamUpper:a5}
 \mathtt{F}_{m,i,k}^{(\kappa)}(\mathbf{X}_i,\alpha_i,\mu_{m,i,k}) \leq  N_{e,m}\sigma^2, \forall m,i,k. \quad\
\end{IEEEeqnarray}
Moreover, at the feasible point $\bigl(\mathbf{X}_i^{(\kappa)},\alpha_i^{(\kappa)},\mu_{m,i,k}^{(\kappa)}\bigr)$, it is also  true that
\begin{equation}\label{eq:RateEvamUpper:a6}
  C_{m,i,k}^{\mathtt{ED},(\kappa)}\bigl(\mathbf{X}_i^{(\kappa)},\alpha_i^{(\kappa)},\mu_{m,i,k}^{(\kappa)}\bigr) = \frac{\ln\bigl(1 + \gamma_{m,i,k}^{\mathtt{ED},(\kappa)}\bigr)}{\alpha_i^{(\kappa)}}
\end{equation}
since the inequality \eqref{eq:RateEvamUpper:a1b} must hold with equality at optimum.

Similarly, constraint  \eqref{eq:RateEvamChange:c} is iteratively approximated by
\begin{IEEEeqnarray}{rCl}\label{eq:RateEvamConvexAppro}
C_{m,i,\ell}^{\mathtt{EU},(\kappa)}(\mathbf{X}_i,\alpha_i,\tilde{\mu}_{m,i,\ell}) &\leq& \Gamma_{i,\ell}^{\mathtt{U}}, \forall m,i,\ell\
\end{IEEEeqnarray}
 with the additional convex   constraint
\begin{IEEEeqnarray}{rCl}\label{eq:RateEvamUpper:bcd}
 \mathtt{P}_{m,i,\ell}^{(\kappa)}(\mathbf{X}_i,\alpha_i,\tilde{\mu}_{m,i,\ell}) \leq  N_{e,m}\sigma^2, \forall m,i,\ell\
\end{IEEEeqnarray}
where $\tilde{\mu}_{m,i,\ell} > 0, \forall m,i,\ell$, are new variables, and
\begin{IEEEeqnarray}{rCl}\label{eq:RateEvamUpper:bcd5}
C_{m,i,\ell}^{\mathtt{EU},(\kappa)}(\mathbf{X}_i,\alpha_i,\tilde{\mu}_{m,i,\ell}) &:=&\frac{\mathtt{a}(\gamma_{m,i,\ell}^{\mathtt{EU},(\kappa)})}{\alpha_i} \nonumber\\ &+& \mathtt{b}(\gamma_{m,i,\ell}^{\mathtt{EU},(\kappa)})\frac{\rho_{i,\ell}^2\|\mathbf{g}_{m,i,\ell}^H\|^2}{\tilde{\mu}_{m,i,\ell}},
\nonumber\end{IEEEeqnarray}
for $\gamma_{m,i,\ell}^{\mathtt{EU},(\kappa)}\triangleq \bigl(\rho_{i,\ell}^{(\kappa)}\bigr)^2\|\mathbf{g}_{m,i,\ell}^H\|^2/\chi_{m,i,\ell}\bigl(\mathbf{X}_i^{(\kappa)}\bigr)$. The function $\mathtt{P}_{m,i,\ell}^{(\kappa)}(\mathbf{X}_i,\alpha_i,\tilde{\mu}_{m,i,\ell})$ in \eqref{eq:RateEvamUpper:bcd} is the convex approximation of 
\[\mathtt{P}_{m,i,\ell}(\mathbf{X}_i,\alpha_i,\tilde{\mu}_{m,i,\ell})\triangleq \frac{\tilde{\mu}_{m,i,\ell}}{\alpha_i}-\Bigl[\chi_{m,i,\ell}(\mathbf{X}_i)-N_{e,m}\sigma^2\Bigr],
\]
 of which the derivation is given in Appendix B.

\textit{Inner Approximation of Power Constraints \eqref{eq:OP4:c} and \eqref{eq:OP4:d}:}
Owing to the convexity of  functions $\|\mathbf{x}\|^2/t$,  we have \cite{Nguyen:TCOM:17}
\begin{eqnarray}
\frac{\|\mathbf{x}\|^2}{t} &\geq& \frac{2\Re\{(\mathbf{x}^{(\kappa)})^H\mathbf{x}\}}{t^{(\kappa)}} - \frac{\|\mathbf{x}^{(\kappa)}\|^2}{(t^{(\kappa)})^2}t\quad\label{B1}
\end{eqnarray}
with $\forall \mathbf{x}\in\mathbb{C}^N, \mathbf{x}^{(\kappa)}\in\mathbb{C}^N, t > 0, t^{(\kappa)} > 0$.
 We make use of the inequality \eqref{B1} to obtain  inner convex approximations for nonconvex constraints \eqref{eq:OP4:c} and \eqref{eq:OP4:d} as
\begin{IEEEeqnarray}{rCl}\label{eq:PowerAppro}
 \sum_{k=1}^K\|\mathbf{w}_{1,k}\|^2 + \|\mathbf{V}_1\|_\F^2 + \frac{1}{\alpha_2}\Bigl(\sum_{k=1}^{K}\|\mathbf{w}_{2,k}\|^2 + \|\mathbf{V}_2\|_\F^2\Bigr)\, +  \nonumber\\
\Bigl(\sum_{k=1}^K\|\mathbf{w}_{1,k}^{(\kappa)}\|^2+ \|\mathbf{V}_1^{(\kappa)}\|_\F^2\Bigr)\frac{\alpha_2}{(\alpha_2^{(\kappa)})^2}-\Bigr(\sum_{k=1}^K\Re\bigl\{(\mathbf{w}_{1,k}^{(\kappa)})^H\mathbf{w}_{1,k}\bigr\}  \nonumber\\
 +\;  \Re\left\{\tr\bigl((\mathbf{V}_1^{(\kappa)})^H\mathbf{V}_1\bigr)\right\}\Bigl)\frac{2}{\alpha_2^{(\kappa)}}  \leq  P_{bs}^{\max} ,   \IEEEyessubnumber\label{eq:OP4convex:c}\qquad\quad\\
 \rho_{1,\ell}^2 - \frac{2\rho_{1,\ell}^{(\kappa)}}{\alpha_2^{(\kappa)}}\rho_{1,\ell} + \frac{(\rho_{1,\ell}^{(\kappa)})^2}{(\alpha_2^{(\kappa)})^2}\alpha_2 \leq P_{1,\ell}^{\max},\;\forall \ell\in\mathcal{L}.\qquad\quad  \IEEEyessubnumber\label{eq:OP4convex:e}
\end{IEEEeqnarray}

Bearing all the above in mind, the following convex  program is solved at the $\kappa$-th iteration:
\begin{IEEEeqnarray}{rCl}\label{eq:OP5}
&&\underset{\mathbf{X},\eta,\boldsymbol{\Gamma},\boldsymbol{\alpha},\boldsymbol{\mu}}{\maxi}\quad \eta \IEEEyessubnumber\label{eq:OP5:a}\\
&&\st\; \eqref{eq:OP1:e}, \eqref{eq:changevariables:b},  \eqref{eq:OP4:e}, \eqref{eq:poscondi}, \eqref{eq:R1ktrust}, \eqref{eq:R1kConvex},  \nonumber \\
 &&\qquad  \eqref{eq:R1qConvexappro},  \eqref{eq:RateEvamUpper:a1a}, \eqref{eq:RateEvamUpper:a5}, \eqref{eq:RateEvamConvexAppro}, \eqref{eq:RateEvamUpper:bcd}, \eqref{eq:PowerAppro},\IEEEyessubnumber\label{eq:OP5:b}\\
&&\qquad  \mu_{m,i,k}>0,  \tilde{\mu}_{m,i,\ell} >0,\forall m,i,k,\ell  \qquad\IEEEyessubnumber\label{eq:OP5:c}
\end{IEEEeqnarray}
where $\boldsymbol{\mu}\triangleq\{\mu_{m,i,k}, \tilde{\mu}_{m,i,\ell}\}_{m\in\mathcal{M},i\in\mathcal{I},k\in\mathcal{K},\ell\in\mathcal{L}}$ represents the collection of the auxiliary variables.
A pseudo code-based path-following procedure  used to solve \eqref{eq:OP1} is given in Algorithm \ref{algo:knownCSI}. After solving \eqref{eq:OP5}, we update the involved variables until convergence, i.e., until the difference between two successive  values of the objective is less than a given threshold $\epsilon$.  

\begin{algorithm}[t]
\begin{algorithmic}[1]
\protect\caption{Proposed Path-Following  Algorithm to Solve SRM-EWCI  \eqref{eq:OP1}}
\label{algo:knownCSI}
\global\long\def\algorithmicrequire{\textbf{Initialization:}}
\REQUIRE  Set $\kappa:=0$ and solve \eqref{eq:OP5feasbile} to generate an initial feasible point $(\mathbf{X}^{(0)},\boldsymbol{\alpha}^{(0)},\boldsymbol{\mu}^{(0)})$.
\REPEAT
\STATE Solve \eqref{eq:OP5} to obtain the optimal solution ($\mathbf{X}^{\star},\eta^{\star},\boldsymbol{\Gamma}^{\star},\boldsymbol{\alpha}^{\star},\boldsymbol{\mu}^{\star}$).
\STATE Update $\mathbf{X}^{(\kappa+1)}:=\mathbf{X}^{\star},\boldsymbol{\alpha}^{(\kappa+1)}:=\boldsymbol{\alpha}^{\star},\boldsymbol{\mu}^{(\kappa+1)}:=\boldsymbol{\mu}^{\star}$.
\STATE Set $\kappa:=\kappa+1.$
\UNTIL Convergence\\
\end{algorithmic} \end{algorithm}

\textit{Practical Implementations:}
The first step of Algorithm \ref{algo:knownCSI}  requires a feasible point $(\mathbf{X}^{(0)},\boldsymbol{\alpha}^{(0)},\boldsymbol{\mu}^{(0)})$ of \eqref{eq:OP4} to  successfully initialize   the computational procedure, which is difficult to find in general. To circumvent this problem, initialized by any feasible $(\mathbf{X}^{(0)},\boldsymbol{\alpha}^{(0)})$ to the convex constraints \{\eqref{eq:OP1:e}, \eqref{eq:changevariables:b},  \eqref{eq:OP4:e}, \eqref{eq:poscondi}, \eqref{eq:R1ktrust}, \eqref{eq:R1kConvex}, \eqref{eq:R1qConvexappro}, \eqref{eq:PowerAppro}\}, the following convex program
\begin{IEEEeqnarray}{rCl}\label{eq:OP5feasbile}
\underset{\mathbf{X},\eta,\boldsymbol{\Gamma},\boldsymbol{\alpha}}{\maxi}&&\;\{\eta - \bar{\eta}_{\min}\} \IEEEyessubnumber\label{eq:OP5feasbile:a}\\
\st\;&& \eqref{eq:OP1:e}, \eqref{eq:changevariables:b},  \eqref{eq:OP4:e}, \eqref{eq:poscondi}, \eqref{eq:R1ktrust}, \eqref{eq:R1kConvex}, \eqref{eq:R1qConvexappro}, \eqref{eq:PowerAppro}\IEEEyessubnumber\label{eq:OP5feasbile:b}\quad\
\end{IEEEeqnarray}
without imposing Eves' constraints, is successively solved until reaching: $\{\eta - \bar{\eta}_{\min}\} \geq 0$. Herein, $\bar{\eta}_{\min} > 0$ is a given value to further improve the convergence speed of solving \eqref{eq:OP4}. The initial feasible $\boldsymbol{\mu}^{(0)}$ is then found by calculating: $\mu_{m,i,k}^{(0)}  = \alpha_i^{(0)}\psi_{m,i,k}(\mathbf{X}_i^{(0)})$ and $\tilde{\mu}_{m,i,\ell}^{(0)}  = \alpha_i^{(0)}\chi_{m,i,\ell}(\mathbf{X}_i^{(0)})$. We have numerically observed that  it usually takes about 2 iterations to generate an initial feasible point of \eqref{eq:OP4}.

\subsection{Convergence and Complexity Analysis}

For the sake of notational convenience, let us define the set of optimization variables $\Psi^{(\kappa)}$, the feasible set $\mathcal{V}^{(\kappa)}$, and the objective  $\eta^{(\kappa)}$ of the problem \eqref{eq:OP5} at the $\kappa$-th iteration by 
\begin{IEEEeqnarray}{rCl}
\Psi^{(\kappa)} &\triangleq& \left\{\mathbf{X},\eta,\boldsymbol{\Gamma},\boldsymbol{\alpha},\boldsymbol{\mu}\right\}, \nonumber \\
\mathcal{V}^{(\kappa)} &\triangleq& \left\{\Psi^{(\kappa)}\bigl|\st\ \eqref{eq:OP5:b}\ \text{and}\ \eqref{eq:OP5:c}\right\}, \nonumber\\
\eta^{(\kappa)} &\triangleq& \left\{\max \eta\bigl|\st\  \Psi^{(\kappa)}\in \mathcal{V}^{(\kappa)}\right\}, \nonumber 
\end{IEEEeqnarray}
respectively.

\begin{proposition}\label{pro:1}
The feasible set $\mathcal{V}^{(\kappa)}$ of the optimization problem \eqref{eq:OP5} is also feasible for \eqref{eq:OP4} (and hence the original problem \eqref{eq:OP1}).
\end{proposition}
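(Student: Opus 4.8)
The plan is to show that every constraint defining $\mathcal{V}^{(\kappa)}$ implies the corresponding (generally tighter or looser) constraint in problem \eqref{eq:OP4}, so that any $\Psi^{(\kappa)}\in\mathcal{V}^{(\kappa)}$, when restricted to the variables $(\mathbf{X},\eta,\boldsymbol{\Gamma},\boldsymbol{\alpha})$ appearing in \eqref{eq:OP4}, is feasible for \eqref{eq:OP4}. Since \eqref{eq:OP4} was already shown equivalent to \eqref{eq:OP1}, feasibility for \eqref{eq:OP4} gives feasibility for \eqref{eq:OP1} (after recovering $\boldsymbol{\tau}$ via \eqref{eq:changevariables:a}). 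The shared convex constraints \eqref{eq:OP1:e}, \eqref{eq:changevariables:b}, \eqref{eq:OP4:e}, and the extra positivity/trust-region constraints \eqref{eq:poscondi}, \eqref{eq:R1ktrust}, and \eqref{eq:OP5:c} are either identical in both problems or auxiliary (they do not appear in \eqref{eq:OP4} but impose nothing beyond the domain already implicit there, e.g. \eqref{eq:poscondi} is exactly the equivalent reformulation established just above \eqref{eq:Rate1kappro}), so these pose no issue.

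The core of the argument is the chain of inequalities: I would invoke the bound \eqref{eq:ineupper} (proved in Appendix A) together with \eqref{eq:inequad} to get $C_{i,k}^{\mathtt{D}}(\mathbf{X}_i,\alpha_i)\ge C_{i,k}^{\mathtt{D},(\kappa)}(\mathbf{X}_i,\alpha_i)$ on the trust region \eqref{eq:R1ktrust}; hence \eqref{eq:R1kConvex} implies \eqref{eq:changeOP3:a}. Similarly, \eqref{eq:Rate1qappro} gives $C_{i,\ell}^{\mathtt{U}}(\mathbf{X}_i,\alpha_i)\ge C_{i,\ell}^{\mathtt{U},(\kappa)}(\mathbf{X}_i,\alpha_i)$, so \eqref{eq:R1qConvexappro} implies \eqref{eq:changeOP3:c}. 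For the Eves' constraints, I would argue in the other direction: \eqref{ineupp} shows that the true rate $C_{m,i,k}^{\mathtt{ED}}/\alpha_i$ is upper bounded by the right-hand side of \eqref{eq:RateEvamUpper:a}, which equals $C_{m,i,k}^{\mathtt{ED},(\kappa)}(\mathbf{X}_i,\alpha_i,\mu_{m,i,k})$ provided $\mu_{m,i,k}\le\alpha_i\psi_{m,i,k}(\mathbf{X}_i)$; the latter is \eqref{eq:RateEvamUpper:a1b}, which is rewritten as \eqref{eq:RateEvamUpper:a4} and then implied by its convex minorant-constraint \eqref{eq:RateEvamUpper:a5} because $\mathtt{F}_{m,i,k}^{(\kappa)}$ upper bounds $\mathtt{F}_{m,i,k}$ (Appendix B). Chaining, \eqref{eq:RateEvamUpper:a1a} together with \eqref{eq:RateEvamUpper:a5} yields $C_{m,i,k}^{\mathtt{ED}}(\mathbf{X}_i,\alpha_i)\le\Gamma_{i,k}^{\mathtt{D}}$, i.e. \eqref{eq:RateEvamChange:a}; the uplink analogue \eqref{eq:RateEvamConvexAppro}$+$\eqref{eq:RateEvamUpper:bcd} gives \eqref{eq:RateEvamChange:c} the same way. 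Finally, \eqref{B1} applied to the terms $\|\mathbf{w}_{1,k}\|^2$, $\|\mathbf{V}_1\|_\F^2$, $\rho_{1,\ell}^2$ scaled by $1/\alpha_2=\tau_1$ shows that the left-hand sides of \eqref{eq:OP4convex:c}, \eqref{eq:OP4convex:e} dominate those of \eqref{eq:OP4:c}, \eqref{eq:OP4:d}, so \eqref{eq:PowerAppro} implies \eqref{eq:OP4:c}--\eqref{eq:OP4:d}. Collecting all these implications establishes the claim.

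I would then note the one subtlety worth a sentence: the decomposition of \eqref{eq:RateEvamChange:a} into \eqref{eq:RateEvamUpper:a1a}--\eqref{eq:RateEvamUpper:a1b} introduces the extra variable $\mu_{m,i,k}$, and the feasible-set inclusion only makes sense after projecting it out; but since the implication $\big(\exists\mu_{m,i,k}>0:\eqref{eq:RateEvamUpper:a1a},\eqref{eq:RateEvamUpper:a1b}\big)\Rightarrow\eqref{eq:RateEvamChange:a}$ holds for \emph{any} admissible $\mu_{m,i,k}$, the projection is harmless. The same remark applies to $\tilde\mu_{m,i,\ell}$.

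The main obstacle is bookkeeping rather than mathematics: one must be careful that each surrogate genuinely bounds the original function on the region carved out by the accompanying trust-region or auxiliary constraint (in particular that \eqref{eq:R1ktrust} is exactly what is needed for $\theta_{i,k}^{(\kappa)}(\mathbf{w}_{i,k})>0$, so that the minorant in \eqref{eq:Rate1ConvexAppro} is well defined and the inequality $\varphi_{i,k}/\theta_{i,k}^{(\kappa)}\le\varphi_{i,k}/(\Re\{\mathbf{h}_{i,k}^H\mathbf{w}_{i,k}\})^2$ from \eqref{eq:inequad} holds), and that the direction of each bound (minorant for the legitimate rates, majorant for the Eves' rates and for the power terms) lines up with the sense of the corresponding constraint. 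Once those sign/direction checks are done constraint-by-constraint, the inclusion $\mathcal{V}^{(\kappa)}\subseteq\{\text{feasible set of }\eqref{eq:OP4}\}$ is immediate, and feasibility for \eqref{eq:OP1} follows from the established equivalence.
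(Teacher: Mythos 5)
Your proposal is correct and follows essentially the same route as the paper's Appendix~C: a constraint-by-constraint check that each surrogate (minorant for the legitimate rates, majorant for the Eves' rates, the auxiliary-variable decomposition, and the power terms) implies the corresponding constraint of \eqref{eq:OP4}, which the paper carries out explicitly only for \eqref{eq:changeOP3:a} versus \eqref{eq:R1kConvex} and then asserts for the rest. The only slip is the reversed inequality in your final parenthetical: \eqref{eq:inequad} gives $\theta_{i,k}^{(\kappa)}(\mathbf{w}_{i,k})\le\bigl(\Re\{\mathbf{h}_{i,k}^H\mathbf{w}_{i,k}\}\bigr)^2$ on the trust region \eqref{eq:R1ktrust}, hence $\varphi_{i,k}(\mathbf{X}_i)/\theta_{i,k}^{(\kappa)}(\mathbf{w}_{i,k})\ge\varphi_{i,k}(\mathbf{X}_i)/\bigl(\Re\{\mathbf{h}_{i,k}^H\mathbf{w}_{i,k}\}\bigr)^2$, which is exactly what makes \eqref{eq:Rate1ConvexAppro} a minorant; since the body of your argument states this bound in the correct direction, the proof is unaffected.
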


\begin{proof}
See Appendix C.
\end{proof}
\noindent From Proposition \ref{pro:1}, it is ready to show the convergence of Algorithm \ref{algo:knownCSI}, which is stated as the following
proposition.
\begin{proposition}\label{pro:2}
Algorithm \ref{algo:knownCSI} yields a non-decreasing sequence of the objective, i.e. $\eta^{(\kappa+1)} \geq \eta^{(\kappa)}$. After finitely many iterations, Algorithm \ref{algo:knownCSI} converges to the Karush-Kuhn-Tucker (KKT) point of \eqref{eq:OP4} (and hence \eqref{eq:OP1}). 
\end{proposition}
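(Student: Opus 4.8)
The plan is to establish Proposition~\ref{pro:2} by the standard successive-convex-approximation (SCA) argument, relying on the three properties that each surrogate was constructed to satisfy: lower-bounding (minorization) of the legitimate rates, upper-bounding (majorization) of the Eves' rates, and tightness at the current iterate. First I would prove monotonicity. At iteration $\kappa$, the point $\Psi^{(\kappa)}$ (built from the previous optimal solution $(\mathbf{X}^{(\kappa)},\boldsymbol{\alpha}^{(\kappa)},\boldsymbol{\mu}^{(\kappa)})$ together with $\eta^{(\kappa)}$ and the induced $\boldsymbol{\Gamma}^{(\kappa)}$) is feasible for the convex program \eqref{eq:OP5}: this is exactly where the agreement identities \eqref{eq:R1kAgree}, \eqref{eq:R1qAgree} and \eqref{eq:RateEvamUpper:a6} (and its UL analogue) are used, since they guarantee that the surrogate constraints \eqref{eq:R1kConvex}, \eqref{eq:R1qConvexappro}, \eqref{eq:RateEvamUpper:a5}, \eqref{eq:RateEvamConvexAppro}, \eqref{eq:RateEvamUpper:bcd} are all met at the previous iterate with the same $\eta$; feasibility of the remaining (genuinely convex) constraints \eqref{eq:OP1:e}, \eqref{eq:changevariables:b}, \eqref{eq:OP4:e}, \eqref{eq:poscondi}, \eqref{eq:R1ktrust}, \eqref{eq:PowerAppro} at $\Psi^{(\kappa)}$ is immediate from their exactness. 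Hence $\eta^{(\kappa)}$ is attainable in \eqref{eq:OP5}, so the optimal value $\eta^{(\kappa+1)}$ satisfies $\eta^{(\kappa+1)}\geq\eta^{(\kappa)}$. Combined with the fact (from Proposition~\ref{pro:1}) that every feasible point of \eqref{eq:OP5} is feasible for \eqref{eq:OP4}, together with the power-budget constraint \eqref{eq:OP4:c}, the objective is bounded above; a monotone bounded sequence converges, so $\{\eta^{(\kappa)}\}$ converges.

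Next I would argue that the limit point is a KKT point of \eqref{eq:OP4}. The argument is: take a convergent subsequence $\Psi^{(\kappa)}\to\Psi^{\ast}=(\mathbf{X}^{\ast},\eta^{\ast},\boldsymbol{\Gamma}^{\ast},\boldsymbol{\alpha}^{\ast},\boldsymbol{\mu}^{\ast})$ (existence on a compact feasible set), and note that $\Psi^{\ast}$ solves the limiting convex program \eqref{eq:OP5} in which all surrogate data $(\mathbf{X}^{(\kappa)},\boldsymbol{\alpha}^{(\kappa)},\boldsymbol{\mu}^{(\kappa)})$ are frozen at $\Psi^{\ast}$ itself. Writing the KKT conditions of that convex program at $\Psi^{\ast}$ and using (i) the tightness identities, which force each surrogate and its true counterpart to agree in value at $\Psi^{\ast}$, and (ii) the gradient-consistency property of the first-order approximations --- each bound \eqref{eq:ineupper}, \eqref{eq:inequad}, \eqref{ineupp}, \eqref{B1} was obtained by linearizing a convex term, so the surrogate and the true function share the same gradient at the construction point --- the KKT system of the surrogate problem becomes, term by term, the KKT system of \eqref{eq:OP4} at $\Psi^{\ast}$. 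One must also invoke a constraint qualification (e.g. Slater's condition) for \eqref{eq:OP5}, which holds because the power constraints and the $\tau$-simplex constraint admit strictly feasible points; this is what legitimizes writing KKT conditions in the first place. Since, by the equivalences established in Section~\ref{sec:knownCSI}, \eqref{eq:OP4} is equivalent to \eqref{eq:OP3} and hence to the original \eqref{eq:OP1} (with $\tau_i$ recovered via \eqref{eq:changevariables:a}), the KKT property transfers back to \eqref{eq:OP1}.

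I would then address the ``finitely many iterations'' wording: strictly speaking the SCA scheme produces an infinite sequence whose accumulation points are KKT points, and ``finitely many'' should be read as ``the stopping criterion $|\eta^{(\kappa+1)}-\eta^{(\kappa)}|<\epsilon$ triggers after finitely many steps'', which follows at once from convergence of $\{\eta^{(\kappa)}\}$. If a cleaner statement is desired, one can note that if $\eta^{(\kappa+1)}=\eta^{(\kappa)}$ exactly, then $\Psi^{(\kappa)}$ is already optimal for the surrogate built at itself, and the gradient-consistency argument shows it is already a KKT point of \eqref{eq:OP4}.

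The main obstacle I anticipate is the gradient-consistency (``smooth tangency'') step for the Eve-rate approximations, because those constraints were handled by a two-stage manipulation: the logarithm was linearized via \eqref{ineupp} and \emph{then} an auxiliary variable $\mu_{m,i,k}$ was introduced to turn a ratio into a separate constraint \eqref{eq:RateEvamUpper:a1b}, which was itself re-expressed as \eqref{eq:RateEvamUpper:a4} and finally approximated (Appendix B). To push the KKT equivalence through one has to check that, at the tight point where $\mu_{m,i,k}=\alpha_i\psi_{m,i,k}(\mathbf{X}_i)$, the composite surrogate $C_{m,i,k}^{\mathtt{ED},(\kappa)}$ together with $\mathtt{F}_{m,i,k}^{(\kappa)}$ reproduces both the value \emph{and} the gradient of the original $C_{m,i,k}^{\mathtt{ED}}$ after eliminating $\mu_{m,i,k}$ --- i.e. the extra multiplier attached to \eqref{eq:RateEvamUpper:a4} must align correctly. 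This bookkeeping, plus verifying the constraint qualification in the enlarged variable space $(\mathbf{X},\eta,\boldsymbol{\Gamma},\boldsymbol{\alpha},\boldsymbol{\mu})$, is where the proof requires care; everything else is the routine SCA template. I would relegate the per-surrogate tangency checks to the appendices (Appendix~B already contains the relevant derivations) and keep the body of the proof at the level of the monotonicity-plus-accumulation-point argument.
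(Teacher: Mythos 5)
Your proposal is correct and follows essentially the same route as the paper's Appendix~D: monotonicity from the minorization/majorization of the rates together with the tightness identities \eqref{eq:R1kAgree}, \eqref{eq:R1qAgree}, \eqref{eq:RateEvamUpper:a6} (the paper phrases this as a chain of inequalities through the true objective, you phrase it as feasibility of the previous iterate in the new surrogate program---these are equivalent), then boundedness and monotone convergence, then the KKT property at accumulation points. The only difference is presentational: where the paper simply invokes \cite[Theorem 1]{Marks:78} for the transfer of KKT conditions from \eqref{eq:OP5} to \eqref{eq:OP4}, you unpack the underlying value/gradient-consistency and constraint-qualification argument (including the bookkeeping for the auxiliary variables $\mu_{m,i,k}$), which is exactly the content of that cited result.
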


\begin{proof}
See Appendix D.
\end{proof}

\textit{Complexity Analysis}: The computational complexity of solving \eqref{eq:OP5} in each iteration is only polynomial in the number of   constraints and optimization variables. To see this, the convex problem \eqref{eq:OP5} involves $(4M+6)(K+L)+2$ linear and quadratic constraints, and $y\triangleq 2N_{\mathtt{t}}^2+2(N_{\mathtt{t}}+M+1)K+2(M+2)L+3$ scalar
real variables. The complexity required to  solve \eqref{eq:OP5} is thus $\mathcal{O}\Bigr(\bigl((4M+6)(K+L)+2\bigl)^{2.5}\bigl(y^2+(4M+6)(K+L)+2\bigl)\Bigl)$.

\section{Secure Transmission Design with  SCSI on Eves}\label{sec:ECDI}
The perfect instantaneous CSI of the Eves may be difficult to obtain in some cases. For instance, Eves can move to another location to wiretap  confidential messages more effectively by protecting its visibility from  transmitters without exposing its CSI or they are  passive users. As a result, the CSI of Eves may change,  and thus,  FD-BS and UL users can only estimate Eves' channel statistics based on  their last known CSI. In this section,  we consider the case in which the FD-BS and UL users have perfect CSI of all  legitimate users as explained in Section \ref{sec:knownCSI}, but only the statistics of CSI  for Eves (i.e., the first-order and second-order statistics) \cite{AkgunTCOM17,WuTCOM17} as follows:
\begin{IEEEeqnarray}{rCl}
\bar{\mathbf{H}}_m &=& \mathbb{E}\left\{\mathbf{H}_m\mathbf{H}_m^H\right\}, \forall m\in\mathcal{M},\nonumber\\
\bar{g}_{m,i,\ell} &=&  \mathbb{E}\left\{\mathbf{g}_{m,i,\ell}\mathbf{g}_{m,i,\ell}^H\right\}, \forall m\in\mathcal{M}, (i,\ell)\in\mathcal{U}.
\label{eq:CDI}\end{IEEEeqnarray}

 Upon the above settings and similar to \eqref{eq:OP4}, the SRM problem \eqref{eq:OP1} with SCSI for Eves,  called SRM-SCSI, can be  re-expressed as
\begin{IEEEeqnarray}{rCl}\label{eq:OP6}
\underset{\mathbf{X},\eta,\boldsymbol{\Gamma},\boldsymbol{\alpha}}{\maxi}&&\quad \eta \IEEEyessubnumber\label{eq:OP6:a}\\
\st\;&&  \eqref{eq:OP1:e}, \eqref{eq:changevariables:b}, \eqref{eq:LegitimateUEs}, \eqref{eq:OP4:c},  \eqref{eq:OP4:d}, \eqref{eq:OP4:e},\IEEEyessubnumber\label{eq:OP6:b}\\
&&\underset{m\in\mathcal{M}}{\max}\, C_{m,i,k}^{\mathtt{ED}}(\mathbf{X}_i,\alpha_i)\leq \Gamma_{i,k}^{\mathtt{D}},   \forall(i,k)\in\mathcal{D},   \IEEEyessubnumber\label{eq:OP6:c}\\
&&\underset{m\in\mathcal{M}}{\max}\,C_{m,i,\ell}^{\mathtt{EU}}(\mathbf{X}_i,\alpha_i)\leq \Gamma_{i,\ell}^{\mathtt{U}}, \forall(i,\ell)\in\mathcal{U}.   \IEEEyessubnumber\label{eq:OP6:d}\qquad\
 \end{IEEEeqnarray}
We note that  problem \eqref{eq:OP6} is also nonconvex and more computationally difficult than solving \eqref{eq:OP4}. Fortunately,
it is clear that the convex approximations presented in Section \ref{sec:knownCSI}  are useful to obtain a solution to \eqref{eq:OP6}. To be specific, the nonconvex constraints in \eqref{eq:LegitimateUEs} were convexified by \eqref{eq:R1kConvex} and \eqref{eq:R1qConvexappro}, while \eqref{eq:OP4:c} and  \eqref{eq:OP4:d} were innerly approximated by the convex constraints \eqref{eq:OP4convex:c} and \eqref{eq:OP4convex:e}, respectively. The main obstacle to solving \eqref{eq:OP6} is to handle \eqref{eq:OP6:c} and \eqref{eq:OP6:d}. Besides, it may not be possible to design a completely secure solution because SCSI for Eves is unbounded and  random. Toward  a safe design \cite{Nguyen:TIFS:16,Ng-14-A},  we consider the replacement of  constraints \eqref{eq:OP6:c} and \eqref{eq:OP6:d} by their minimum outage requirement as
\begin{IEEEeqnarray}{rCl}\label{eq:OP7}
&&\underset{\mathbf{X},\eta,\boldsymbol{\Gamma},\boldsymbol{\alpha}}{\maxi}\quad \eta \IEEEyessubnumber\label{eq:OP7:a}\\
&&\st\;  \eqref{eq:OP1:e}, \eqref{eq:changevariables:b}, \eqref{eq:LegitimateUEs}, \eqref{eq:OP4:c},  \eqref{eq:OP4:d}, \eqref{eq:OP4:e},\IEEEyessubnumber\label{eq:OP7:b}\\
&&\Pro\Bigl(\underset{m\in\mathcal{M}}{\max} C_{m,i,k}^{\mathtt{ED}}(\mathbf{X}_i,\alpha_i)\leq \Gamma_{i,k}^{\mathtt{D}}\Bigl) \geq \epsilon_{i,k},   \forall(i,k)\in\mathcal{D},   \IEEEyessubnumber\label{eq:OP7:c}\qquad\\
&&\Pro\Bigl(\underset{m\in\mathcal{M}}{\max}C_{m,i,\ell}^{\mathtt{EU}}(\mathbf{X}_i,\alpha_i)\leq \Gamma_{i,\ell}^{\mathtt{U}}\Bigl)\geq \tilde{\epsilon}_{i,\ell}, \forall(i,\ell)\in\mathcal{U}.   \IEEEyessubnumber\label{eq:OP7:d}
 \end{IEEEeqnarray}
It is stated that the probabilities of  Eves' maximum allowable rates   targeted for DL user ($i,k$) and UL user $(i,\ell)$ should be   greater than or equal to certain constants  $\epsilon_{i,k}$ and $\tilde{\epsilon}_{i,\ell}$, respectively. In fact, $\epsilon_{i,k}$ and $\tilde{\epsilon}_{i,\ell}$ are required to be close to 1 to provide secure communications. To evaluate \eqref{eq:OP7:c} and \eqref{eq:OP7:d}, we first introduce the following lemma.
\begin{lemma}\label{lemma:1}
Assuming all Eves have the same channel properties and they are independent,  \eqref{eq:OP7:c} and \eqref{eq:OP7:d}  are  converted  to the following constraints:
\begin{IEEEeqnarray}{rCl}
\frac{\mathbf{w}_{i,k}^H\bar{\mathbf{H}}_{m}\mathbf{w}_{i,k}}{e^{\alpha_i\Gamma_{i,k}^{\mathtt{D}}}-1}  \leq \bar{\psi}_{m,i,k}(\mathbf{X}_i) + \bigl(1-\epsilon_{i,k}^{1/M}\bigr)N_{e,m}\sigma^2,\nonumber\quad  \\
 \forall m\in\mathcal{M}, (i,k)\in\mathcal{D}, \label{eq:OP7:c1}\quad
 \end{IEEEeqnarray}
and 
\begin{IEEEeqnarray}{rCl}
\frac{\rho_{i,\ell}^2\bar{g}_{m,i,\ell} }{e^{\alpha_i\Gamma_{i,\ell}^{\mathtt{U}}}-1} \leq \bar{\chi}_{m,i,\ell}(\mathbf{X}_i) +\bigl(1-\tilde{\epsilon}_{i,\ell}^{1/M}\bigl) N_{e,m}\sigma^2,\nonumber\\
 \forall m\in\mathcal{M}, (i,\ell)\in\mathcal{U}\label{eq:OP7:d1}\quad
 \end{IEEEeqnarray}
 where 
\begin{IEEEeqnarray}{rCl}
\bar{\psi}_{m,i,k}(\mathbf{X}_i) &\triangleq&\mathtt{\bar{Q}}_{m,i}(\mathbf{X}_i)-\mathbf{w}_{i,k}^H\bar{\mathbf{H}}_{m}\mathbf{w}_{i,k},\nonumber\\
\bar{\chi}_{m,i,\ell}(\mathbf{X}_i) &\triangleq& \mathtt{\bar{Q}}_{m,i}(\mathbf{X}_i)-\rho_{i,\ell}^2\bar{g}_{m,i,\ell},\nonumber\\
 \mathtt{\bar{Q}}_{m,i}(\mathbf{X}_i)&\triangleq&\sum\nolimits_{k'=1}^K\mathbf{w}_{i,k'}^H\bar{\mathbf{H}}_{m}\mathbf{w}_{i,k'} +\tr\bigl(\mathbf{V}_{i}^H\bar{\mathbf{H}}_{m}\mathbf{V}_{i}\bigr)\nonumber\\
 &+&\; \sum\nolimits_{\ell^{'}=1}^L\rho_{i,\ell^{'}}^2\bar{g}_{m,i,\ell^{'}} \nonumber.
 \end{IEEEeqnarray}
\end{lemma}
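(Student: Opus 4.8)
The plan is to collapse the joint chance constraints \eqref{eq:OP7:c}--\eqref{eq:OP7:d}, which couple all $M$ Eves, down to the per-Eve deterministic constraints \eqref{eq:OP7:c1}--\eqref{eq:OP7:d1} in three moves: decouple the $\max$ over Eves, linearise each per-Eve rate constraint into one affine in the squared channel gains, and then pass from the resulting chance constraint to a condition on first/second-order moments through Markov's inequality, so that only the statistics in \eqref{eq:CDI} survive.

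First I would use the i.i.d.\ assumption on the Eves. For fixed $(\mathbf{X}_i,\alpha_i)$, $C_{m,i,k}^{\mathtt{ED}}$ depends only on Eve $m$'s channels $\mathbf{H}_m$ and $\{\mathbf{g}_{m,i,\ell}\}_{\ell\in\mathcal{L}}$ (cf.\ \eqref{eq:RateEvam}--\eqref{eq:EavDominator}), so the events $\{C_{m,i,k}^{\mathtt{ED}}\le\Gamma_{i,k}^{\mathtt{D}}\}$, $m\in\mathcal{M}$, are independent and identically distributed, whence
\begin{align*}
\Pro\Bigl(\max_{m\in\mathcal{M}}C_{m,i,k}^{\mathtt{ED}}\le\Gamma_{i,k}^{\mathtt{D}}\Bigr)
&=\prod_{m=1}^{M}\Pro\bigl(C_{m,i,k}^{\mathtt{ED}}\le\Gamma_{i,k}^{\mathtt{D}}\bigr)\\
&=\bigl[\Pro\bigl(C_{m,i,k}^{\mathtt{ED}}\le\Gamma_{i,k}^{\mathtt{D}}\bigr)\bigr]^{M},
\end{align*}
and since $t\mapsto t^{M}$ is increasing on $[0,1]$, \eqref{eq:OP7:c} is equivalent to $\Pro(C_{m,i,k}^{\mathtt{ED}}\le\Gamma_{i,k}^{\mathtt{D}})\ge\epsilon_{i,k}^{1/M}$ for the generic Eve $m$; the same manipulation turns \eqref{eq:OP7:d} into $\Pro(C_{m,i,\ell}^{\mathtt{EU}}\le\Gamma_{i,\ell}^{\mathtt{U}})\ge\tilde\epsilon_{i,\ell}^{1/M}$.

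Next, from \eqref{eq:RateEvam} the per-Eve event $\{C_{m,i,k}^{\mathtt{ED}}\le\Gamma_{i,k}^{\mathtt{D}}\}$ is, after exponentiating, exactly $\{\|\mathbf{H}_m^H\mathbf{w}_{i,k}\|^2\le(e^{\alpha_i\Gamma_{i,k}^{\mathtt{D}}}-1)\psi_{m,i,k}(\mathbf{X}_i)\}$; splitting $\psi_{m,i,k}(\mathbf{X}_i)$ as in \eqref{eq:EavDominator} into its random interference part $\Xi_{m,i,k}(\mathbf{X}_i)\triangleq\psi_{m,i,k}(\mathbf{X}_i)-N_{e,m}\sigma^2$ and the deterministic noise floor $N_{e,m}\sigma^2$, this is the same as $\{Z_{m,i,k}\le(e^{\alpha_i\Gamma_{i,k}^{\mathtt{D}}}-1)N_{e,m}\sigma^2\}$ with $Z_{m,i,k}\triangleq\|\mathbf{H}_m^H\mathbf{w}_{i,k}\|^2-(e^{\alpha_i\Gamma_{i,k}^{\mathtt{D}}}-1)\Xi_{m,i,k}(\mathbf{X}_i)$. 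Substituting $\mathbb{E}\{\mathbf{H}_m\mathbf{H}_m^H\}=\bar{\mathbf{H}}_m$ and $\mathbb{E}\{\mathbf{g}_{m,i,\ell}\mathbf{g}_{m,i,\ell}^H\}=\bar{g}_{m,i,\ell}$ from \eqref{eq:CDI} gives $\mathbb{E}\{\|\mathbf{H}_m^H\mathbf{w}_{i,k}\|^2\}=\mathbf{w}_{i,k}^H\bar{\mathbf{H}}_m\mathbf{w}_{i,k}$ and $\mathbb{E}\{\Xi_{m,i,k}(\mathbf{X}_i)\}=\bar\psi_{m,i,k}(\mathbf{X}_i)$, which are exactly the quantities named in the Lemma.

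The crux is the last move: apply Markov's inequality $\Pro(Z\ge t)\le\mathbb{E}\{Z\}/t$ to $Z_{m,i,k}$ with $t=(e^{\alpha_i\Gamma_{i,k}^{\mathtt{D}}}-1)N_{e,m}\sigma^2$, so that the chance constraint $\Pro(Z_{m,i,k}\le t)\ge\epsilon_{i,k}^{1/M}$ is guaranteed once $\mathbb{E}\{Z_{m,i,k}\}\le(1-\epsilon_{i,k}^{1/M})t$; plugging in the two expectations above and rearranging yields exactly \eqref{eq:OP7:c1}. This is the step that makes the reformulation a \emph{safe}, conservative surrogate rather than an exact equivalence: Markov is invoked with $Z_{m,i,k}$ regarded as nonnegative, i.e.\ in the regime $\|\mathbf{H}_m^H\mathbf{w}_{i,k}\|^2\ge(e^{\alpha_i\Gamma_{i,k}^{\mathtt{D}}}-1)\Xi_{m,i,k}(\mathbf{X}_i)$ in which \eqref{eq:OP7:c1} is actually binding, so one should also note that when $\mathbf{w}_{i,k}^H\bar{\mathbf{H}}_m\mathbf{w}_{i,k}<(e^{\alpha_i\Gamma_{i,k}^{\mathtt{D}}}-1)\bar\psi_{m,i,k}(\mathbf{X}_i)$ the constraint \eqref{eq:OP7:c1} is automatically satisfied, so nothing is lost by it. Finally, repeating the last two moves verbatim for the UL leakage --- $\|\mathbf{H}_m^H\mathbf{w}_{i,k}\|^2\mapsto\rho_{i,\ell}^2\|\mathbf{g}_{m,i,\ell}^H\|^2$, $\psi_{m,i,k}\mapsto\chi_{m,i,\ell}$, $\bar\psi_{m,i,k}\mapsto\bar\chi_{m,i,\ell}$ --- produces \eqref{eq:OP7:d1} and completes the Lemma.
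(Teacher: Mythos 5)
Your proposal is correct and follows essentially the same route as the paper's Appendix E: decouple the $\max$ over Eves via the i.i.d.\ assumption to get the per-Eve requirement $\Pro(C_{m,i,k}^{\mathtt{ED}}\le\Gamma_{i,k}^{\mathtt{D}})\ge\epsilon_{i,k}^{1/M}$, rewrite the complementary event as an affine condition in $\|\mathbf{H}_m^H\mathbf{w}_{i,k}\|^2$ and $\psi_{m,i,k}-N_{e,m}\sigma^2$, and then apply Markov's inequality so that only the moments $\bar{\mathbf{H}}_m$, $\bar{g}_{m,i,\ell}$ enter, yielding \eqref{eq:OP7:c1}--\eqref{eq:OP7:d1} after rearrangement. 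Your explicit remark that the Markov step makes this a conservative (safe) surrogate rather than an exact equivalence matches the paper's own caveat in its Remark following Algorithm 2.
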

\begin{proof}
See Appendix E.
\end{proof}

We note that  constraint \eqref{eq:OP7:c1} is still nonconvex but can be further shaped to take the equivalent  form:
\begin{IEEEeqnarray}{rCl}\label{eq:OP7:c3}
&&\frac{\mathbf{w}_{i,k}^H\bar{\mathbf{H}}_{m}\mathbf{w}_{i,k}}{\beta_{i,k}^{\mathtt{D}}}  \leq \bar{\psi}_{m,i,k}(\mathbf{X}_i) + \bigl(1-\epsilon_{i,k}^{1/M}\bigr)N_{e,m}\sigma^2, \IEEEyessubnumber\label{eq:OP7:c3a:nonconvex}\qquad\\
&& \beta_{i,k}^{\mathtt{D}} \leq e^{\alpha_i\Gamma_{i,k}^{\mathtt{D}}}-1 \Leftrightarrow \frac{\ln(1 + \beta_{i,k}^{\mathtt{D}})}{\alpha_i} \leq \Gamma_{i,k}^{\mathtt{D}} \IEEEyessubnumber\label{eq:OP7:c3b}\end{IEEEeqnarray}
where $\beta_{i,k}^{\mathtt{D}} > 0, \forall(i,k)\in\mathcal{D}$ are new variables. For \eqref{eq:OP7:c3a:nonconvex}, its LHS is a quadratic-over-affine function (which is convex) and the first term of the RHS is a quadratic function. Then, we iteratively replace \eqref{eq:OP7:c3a:nonconvex} by
\begin{IEEEeqnarray}{rCl}\label{eq:OP7:c31}
\frac{\mathbf{w}_{i,k}^H\bar{\mathbf{H}}_{m}\mathbf{w}_{i,k}}{\beta_{i,k}^{\mathtt{D}}}  \leq \bar{\psi}_{m,i,k}^{(\kappa)}(\mathbf{X}_i) + \bigl(1-\epsilon_{i,k}^{1/M}\bigr)N_{e,m}\sigma^2 \IEEEyessubnumber\label{eq:OP7:c3a},\nonumber\quad  \\
 \forall m\in\mathcal{M}, (i,k)\in\mathcal{D}\quad
\end{IEEEeqnarray}
where $\bar{\psi}_{m,i,k}^{(\kappa)}(\mathbf{X}_i)$ is the inner approximation of $\bar{\psi}_{m,i,k}(\mathbf{X}_i)$  using \eqref{eq:inequad} as
\begin{IEEEeqnarray}{rCl}\bar{\psi}_{m,i,k}^{(\kappa)}(\mathbf{X}_i) = \mathtt{\bar{Q}}_{m,i}^{(\kappa)}(\mathbf{X}_i) - 2\Re\bigl\{\bigl(\mathbf{w}_{i,k}^{(\kappa)}\bigr)^H\bar{\mathbf{H}}_{m}\mathbf{w}_{i,k}\bigl\} \nonumber\\
+\; (\mathbf{w}_{i,k}^{(\kappa)})^H \bar{\mathbf{H}}_{m}\mathbf{w}_{i,k}^{(\kappa)}, \nonumber
\end{IEEEeqnarray}
and $\mathtt{\bar{Q}}_{m,i}^{(\kappa)}(\mathbf{X}_i)$ is obtained by taking the expectation operations on each individual random terms  of $\mathtt{Q}_{m,i}^{(\kappa)}(\mathbf{X}_i)$  given in \eqref{B4} w.r.t. \eqref{eq:CDI}.
 By using \eqref{ineupp},  \eqref{eq:OP7:c3b} holds that
\begin{IEEEeqnarray}{rCl}\label{eq:OP7:c3b1}
\frac{\mathtt{a}(\beta_{i,k}^{\mathtt{D},{(\kappa)}})}{\alpha_i} + \mathtt{b}(\beta_{i,k}^{\mathtt{D},{(\kappa)}})\frac{\beta_{i,k}^{\mathtt{D}}}{\alpha_i} \leq \Gamma_{i,k}^{\mathtt{D}}.
\end{IEEEeqnarray}
For $\mathcal{W}(\beta_{i,k}^{\mathtt{D}},\alpha_i)\triangleq\beta_{i,k}^{\mathtt{D}}/\alpha_i$ and applying \eqref{B3} yields
\begin{IEEEeqnarray}{rCl}\label{eq:approW}
\mathcal{W}(\beta_{i,k}^{\mathtt{D}},\alpha_i) &\leq& \frac{1}{2}\Biggl(\frac{(\beta_{i,k}^{\mathtt{D}})^2}{\beta_{i,k}^{\mathtt{D},(\kappa)}}\frac{1}{\alpha_i^{(\kappa)}} + \beta_{i,k}^{\mathtt{D},(\kappa)}\frac{1}{2\alpha_i-\alpha_i^{(\kappa)}}\Biggr) \nonumber\\
&:=& \mathcal{W}^{(\kappa)}(\beta_{i,k}^{\mathtt{D}},\alpha_i).
\end{IEEEeqnarray}
As a result, the following inequality holds
\begin{IEEEeqnarray}{rCl}\label{eq:OP7:c3c1}
 \frac{\mathtt{a}(\beta_{i,k}^{\mathtt{D},{(\kappa)}})}{\alpha_i} + \mathtt{b}(\beta_{i,k}^{\mathtt{D},{(\kappa)}})\mathcal{W}^{(\kappa)}(\beta_{i,k}^{\mathtt{D}},\alpha_i) \leq \Gamma_{i,k}^{\mathtt{D}},\; \forall (i,k)\in\mathcal{D}\qquad
\end{IEEEeqnarray}
which is the convex approximation of \eqref{eq:OP7:c3b1}.

By following  steps \eqref{eq:OP7:c3}-\eqref{eq:OP7:c3c1}, we
 equivalently decompose \eqref{eq:OP7:d1} into  a set of the following convex constraints:
\begin{IEEEeqnarray}{rCl}\label{eq:OP7:d3}
&&\frac{\rho_{i,\ell}^2\bar{g}_{m,i,\ell}}{\beta_{i,\ell}^{\mathtt{U}}}  \leq \bar{\chi}^{(\kappa)}_{m,i,\ell}(\mathbf{X}_i) + \bigl(1-\tilde{\epsilon}_{i,\ell}^{1/M}\bigr)N_{e,m}\sigma^2,\nonumber\\  
&&\qquad\qquad\qquad\qquad\qquad\qquad\quad  \forall m\in\mathcal{M}, (i,\ell)\in\mathcal{U}, \IEEEyessubnumber\label{eq:OP7:d3a}\quad\\
&& \frac{\mathtt{a}(\beta_{i,\ell}^{\mathtt{U},{(\kappa)}})}{\alpha_i} + \mathtt{b}(\beta_{i,\ell}^{\mathtt{U},{(\kappa)}})\mathcal{W}^{(\kappa)}(\beta_{i,\ell}^{\mathtt{U}},\alpha_i) \leq \Gamma_{i,\ell}^{\mathtt{U}}, \forall (i,\ell)\in\mathcal{U}\qquad\IEEEyessubnumber\label{eq:OP7:d3c}\  \end{IEEEeqnarray}
where $\beta_{i,\ell}^{\mathtt{U}}>0, \forall(i,\ell)\in\mathcal{U}$ are new variables, and $\bar{\chi}^{(\kappa)}_{m,i,\ell}(\mathbf{X}_i)$ is the inner approximation  of $\bar{\chi}_{m,i,\ell}(\mathbf{X}_i)$ by using \eqref{eq:inequad}, which is given as
\begin{IEEEeqnarray}{rCl}\bar{\chi}^{(\kappa)}_{m,i,\ell}(\mathbf{X}_i) = \mathtt{\bar{Q}}_{m,i}^{(\kappa)}(\mathbf{X}_i) - \left(2\rho_{i,\ell}^{(\kappa)}\rho_{i,\ell} -  \bigl(\rho_{i,\ell}^{(\kappa)}\bigr)^2\right)\bar{g}_{m,i,\ell}. \nonumber
\end{IEEEeqnarray}

In summary,  the following convex program, which is an inner approximation of \eqref{eq:OP7}, is solved at the $\kappa$-th iteration:
\begin{IEEEeqnarray}{rCl}\label{eq:OP8}
\underset{\mathbf{X},\eta,\boldsymbol{\Gamma},\boldsymbol{\alpha},\boldsymbol{\beta}}{\maxi}&&\quad \eta \IEEEyessubnumber\label{eq:OP8:a}\\
\st\;&& \eqref{eq:OP1:e}, \eqref{eq:changevariables:b},  \eqref{eq:OP4:e}, \eqref{eq:poscondi}, \eqref{eq:R1ktrust}, \eqref{eq:R1kConvex},  \nonumber\\ 
  && \eqref{eq:R1qConvexappro},\eqref{eq:PowerAppro}, \eqref{eq:OP7:c31}, \eqref{eq:OP7:c3c1}, \eqref{eq:OP7:d3},\IEEEyessubnumber\label{eq:OP8:b}\\
&& \beta_{i,k}^{\mathtt{D}} > 0,  \beta_{i,\ell}^{\mathtt{U}} > 0,\  \forall(i,k)\in\mathcal{D}, (i,\ell)\in\mathcal{U}  \IEEEyessubnumber\label{eq:OP8:c}\qquad
\end{IEEEeqnarray}
where $\boldsymbol{\beta}\triangleq\{\beta_{i,k}^{\mathtt{D}},  \beta_{i,\ell}^{\mathtt{U}}\}_{i\in\mathcal{I},k\in\mathcal{K},\ell\in\mathcal{L}}$. The proposed Algorithm \ref{algo:unknownCSI} outlines the steps to solve \eqref{eq:OP7}. Specifically, we solve the convex program \eqref{eq:OP8}   in step 2 to obtain the optimal solution denoted by ($\mathbf{X}^{\star},\eta^{\star},\boldsymbol{\Gamma}^{\star},\boldsymbol{\alpha}^{\star}, \boldsymbol{\beta}^{\star}$) and update ($\mathbf{X}^{\star},\boldsymbol{\alpha}^{\star}, \boldsymbol{\beta}^{\star}$) as the feasible point for  the next iteration in step 3 of Algorithm \ref{algo:unknownCSI} until convergence. An initial feasible point to solve \eqref{eq:OP7} is easily found using the same method reported in Section \ref{sec:knownCSI}.

\begin{algorithm}[t]
\begin{algorithmic}[1]
\protect\caption{Proposed Path-Following  Algorithm to Solve SRM-SCSI  \eqref{eq:OP7} }
\label{algo:unknownCSI}
\global\long\def\algorithmicrequire{\textbf{Initialization:}}
\REQUIRE  Set $\kappa:=0$ and generate an initial feasible point $(\mathbf{X}^{(0)},\boldsymbol{\alpha}^{(0)},\boldsymbol{\beta}^{(0)})$.
\REPEAT
\STATE Solve \eqref{eq:OP8} to obtain the optimal solution ($\mathbf{X}^{\star},\eta^{\star},\boldsymbol{\Gamma}^{\star},\boldsymbol{\alpha}^{\star}, \boldsymbol{\beta}^{\star}$).
\STATE Update $\mathbf{X}^{(\kappa+1)}:=\mathbf{X}^{\star},\boldsymbol{\alpha}^{(\kappa+1)}:=\boldsymbol{\alpha}^{\star},\boldsymbol{\beta}^{(\kappa+1)}:=\boldsymbol{\beta}^{\star}$.
\STATE Set $\kappa:=\kappa+1.$
\UNTIL Convergence\\
\end{algorithmic} \end{algorithm}

\begin{remark}
An important remark here is that the feasible point for \eqref{eq:OP8} is also feasible for  problem  \eqref{eq:OP7}  but not vice versa because the results  in  \eqref{eq:OP7:c1} and \eqref{eq:OP7:d1} are obtained by using the Markov upper bound on the  outage probabilities for DL and UL users   (see Appendix E). Thus, the performance under the proposed method  serves as a lower bound on the SR performance, which  also coincides with the conclusion in \cite{Ng-14-A}.
\end{remark}

\section{Secure Transmission Design under Worst-case Scenario}\label{sec:MMSEEavs}
By introducing AN into a network, the signal strength distribution is essentially affected. The Eves can learn that  additional interference is caused by AN, i.e., conduct a characterization of the average power of the received signals and classify their shape (whether a poor wiretap link quality is
due to jamming) \cite{Xu:2005:FLD}. Therefore, each Eve can employ a more advanced linear decoder to combat such additional interference, i.e., the MMSE decoder for simplicity.  In addition, we assume that targeted message at an Eve is decoded after canceling all multiuser interference caused by DL  and UL users \cite{SunTWC16}. 
As a consequence, the information rates at  the $m$-th Eve in \eqref{eq:RateEvam}  can be reformulated as
\begin{IEEEeqnarray}{rCl}\label{eq:RateEvamRevise}
\hat{C}_{m,i,k}^{\mathtt{ED}}(\mathbf{X}_i,\tau_i) &=& \tau_i\ln\bigl(1 + \mathbf{w}_{i,k}^H\mathbf{H}_{m}\boldsymbol{\Xi}_{m,i}^{-1}\mathbf{H}_{m}^H\mathbf{w}_{i,k}\bigr),\IEEEyessubnumber\quad\\
\hat{C}_{m,i,\ell}^{\mathtt{EU}}(\mathbf{X}_i,\tau_i) &=& \tau_i\ln\bigl(1 + \rho_{i,\ell}^2\mathbf{g}_{m,i,\ell}\boldsymbol{\Xi}_{m,i}^{-1}\mathbf{g}_{m,i,\ell}^H\bigr)\IEEEyessubnumber
\end{IEEEeqnarray}
where
$\boldsymbol{\Xi}_{m,i} \triangleq\mathbf{H}_{m}^H\mathbf{V}_{i}\mathbf{V}_{i}^H\mathbf{H}_{m} + \sigma^2\mathbf{I}_{N_{e,m}}$, and for notational simplicity we dropped the arguments of  $\boldsymbol{\Xi}_{m,i}$. One can see that the AN signals contribute very much to neutralizing a higher
potential for information leakage to the Eves.

The above assumptions lead to a worst-case scenario  to provide secure communications of the SRM problem \eqref{eq:OP1}, called SRM-WCS, which can be mathematically formulated by
\begin{IEEEeqnarray}{rCl}\label{eq:OP9}
\underset{\mathbf{X},\boldsymbol{\tau}}{\maxi}&&\;\underset{\substack{(i,k)\in\mathcal{D}\\ (i,\ell)\in\mathcal{U}}}{\mini}\;\left\{\hat{R}_{i,k}^{\mathtt{D}}(\mathbf{X}_i,\tau_i),\hat{R}_{i,\ell}^{\mathtt{U}}(\mathbf{X}_i,\tau_i)\right\} \IEEEyessubnumber\label{eq:OP9:a}\quad\\
\st\ &&  \eqref{eq:OP1:b}-\eqref{eq:OP1:f}  \IEEEyessubnumber\label{eq:OP9:b}
\end{IEEEeqnarray}
where 
\[\hat{R}_{i,k}^{\mathtt{D}}(\mathbf{X}_i,\tau_i) \triangleq C_{i,k}^{\mathtt{D}}(\mathbf{X}_i,\tau_i) - \underset{m\in\mathcal{M}}{\max}\,\hat{C}_{m,i,k}^{\mathtt{ED}}(\mathbf{X}_i,\tau_i),\] 
\[\hat{R}_{i,\ell}^{\mathtt{U}}(\mathbf{X}_i,\tau_i) \triangleq C_{i,\ell}^{\mathtt{U}}(\mathbf{X}_i,\tau_i) - \underset{m\in\mathcal{M}}{\max}\,\hat{C}_{m,i,\ell}^{\mathtt{EU}}(\mathbf{X}_i,\tau_i).\]
 By exploiting the developments in Section \ref{sec:knownCSI},  problem \eqref{eq:OP9} can be transformed to
\begin{IEEEeqnarray}{rCl}\label{eq:OP10}
\underset{\mathbf{X},\eta,\boldsymbol{\Gamma},\boldsymbol{\alpha}}{\maxi}&&\quad \eta \IEEEyessubnumber\label{eq:OP10:a}\\
\st\;&& \eqref{eq:OP1:e}, \eqref{eq:changevariables:b},  \eqref{eq:OP4:e}, \eqref{eq:poscondi}, \eqref{eq:R1ktrust}, \eqref{eq:R1kConvex},    \eqref{eq:R1qConvexappro},   \eqref{eq:PowerAppro},\IEEEyessubnumber\label{eq:OP10:b}\\
&& \hat{C}_{m,i,k}^{\mathtt{ED}}(\mathbf{X}_i,\alpha_i)\leq \Gamma_{i,k}^{\mathtt{D}},  \forall m\in\mathcal{M}, (i,k)\in\mathcal{D},   \IEEEyessubnumber\label{eq:OP10:c}\qquad \\
&&\hat{C}_{m,i,\ell}^{\mathtt{EU}}(\mathbf{X}_i,\alpha_i)\leq \Gamma_{i,\ell}^{\mathtt{U}}, \forall m\in\mathcal{M}, (i,\ell)\in\mathcal{U}   \IEEEyessubnumber\label{eq:OP10:d}
\end{IEEEeqnarray}
where 
$\hat{C}_{m,i,k}^{\mathtt{ED}}(\mathbf{X}_i,\alpha_i) = \frac{1}{\alpha_i}\ln\bigl(1 + \mathbf{w}_{i,k}^H\mathbf{H}_{m}\boldsymbol{\Xi}_{m,i}^{-1}\mathbf{H}_{m}^H\mathbf{w}_{i,k}\bigr)$ and
$\hat{C}_{m,i,\ell}^{\mathtt{EU}}(\mathbf{X}_i,\alpha_i) = \frac{1}{\alpha_i}\ln\bigl(1 + \rho_{i,\ell}^2\mathbf{g}_{m,i,\ell}\boldsymbol{\Xi}_{m,i}^{-1}\mathbf{g}_{m,i,\ell}^H\bigr)$. We note that finding the tight convex approximations for   \eqref{eq:OP10:c} and \eqref{eq:OP10:d} are more challenging  than  those of \eqref{eq:RateEvamChange:a} and \eqref{eq:RateEvamChange:c}.

 We first introduce new variables $t_{m,i,k} > 0,\forall m,i,k$, to  equivalently express \eqref{eq:OP10:c} as
\begin{IEEEeqnarray}{rCl}\label{eq:RateEvamUpper:OP10c}
\frac{\ln\bigl(1 + t_{m,i,k}\bigr)}{\alpha_i}  &\leq&  \Gamma_{i,k}^{\mathtt{D}},\IEEEyessubnumber\label{eq:RateEvamUpper:OP10ca1}         \\ 
\mathbf{w}_{i,k}^H\mathbf{H}_{m}\boldsymbol{\Xi}_{m,i}^{-1}\mathbf{H}_{m}^H\mathbf{w}_{i,k} &\leq& t_{m,i,k}.\IEEEyessubnumber\label{eq:RateEvamUpper:OP10ca2}\qquad
\end{IEEEeqnarray}
Constraint \eqref{eq:RateEvamUpper:OP10ca1} follows from \eqref{eq:OP7:c3c1} that
\begin{IEEEeqnarray}{rCl}\label{eq:RateEvamUpper:OP10ca}
\frac{\mathtt{a}(t_{m,i,k}^{(\kappa)})}{\alpha_i} + \mathtt{b}(t_{m,i,k}^{(\kappa)})\mathcal{W}^{(\kappa)}(t_{m,i,k},\alpha_i) \leq \Gamma_{i,k}^{\mathtt{D}},\ \forall m,i,k.\quad
\end{IEEEeqnarray}
Although \eqref{eq:RateEvamUpper:OP10ca2} is still a nonconvex constraint, its tractable form can be obtained by
\begin{equation}\label{eq:10a:6}
\begin{bmatrix}
 t_{m,i,k}       & \mathbf{w}_{i,k}^H\mathbf{H}_{m}\\
   \mathbf{H}_{m}^H\mathbf{w}_{i,k}       & \boldsymbol{\Xi}_{m,i}
\end{bmatrix}\succeq \mathbf{0}
\end{equation}
which is a result of applying Schur complement \cite{Stephen}. Note that \eqref{eq:10a:6} is convex  if and only if this set  can be written as a linear matrix inequality (LMI). Let us define $\boldsymbol{\Xi}_{m,i} \triangleq \boldsymbol{\Sigma}_{m,i} + \sigma^2\mathbf{I}_{N_{e,m}}$, where $\boldsymbol{\Sigma}_{m,i} = \mathbf{H}_{m}^H\mathbf{V}_{i}\mathbf{V}_{i}^H\mathbf{H}_{m}$. The function  $\boldsymbol{\Sigma}_{m,i}$  can be  linearized  by  the linear mapping	 \cite{Nguyen:JSAC:17}:
\begin{IEEEeqnarray}{rCl}	\boldsymbol{\Sigma}_{m,i}^{(\kappa)} \triangleq
	\mathbf{H}_m^H\Bigl(\mathbf{V}_{i}(\mathbf{V}_{i}^{(\kappa)})^H + \mathbf{V}_{i}^{(\kappa)}\mathbf{V}_{i}^H 
	- \mathbf{V}_{i}^{(\kappa)}(\mathbf{V}_{i}^{(\kappa)})^H\Bigr)\mathbf{H}_m.\nonumber
\end{IEEEeqnarray}	
Thus, the LMI for \eqref{eq:10a:6} is given by
\begin{equation}\label{eq:10a:6a}
\begin{bmatrix}
  t_{m,i,k}      & \mathbf{w}_{i,k}^H\mathbf{H}_{m}\\
   \mathbf{H}_{m}^H\mathbf{w}_{i,k}       & \boldsymbol{\Sigma}_{m,i}^{(\kappa)} + \sigma^2\mathbf{I}_{N_{e,m}}
\end{bmatrix}\succeq \mathbf{0}, \forall m,i,k
\end{equation}
over the trust region 
\begin{IEEEeqnarray}{rCl}\label{eq:RateEvamUpper:OP10trust}
  \boldsymbol{\Sigma}_{m,i}^{(\kappa)} \succeq \mathbf{0},\quad \forall m\in\mathcal{M},i\in\mathcal{I}.
	\end{IEEEeqnarray}
	
Analogously,  applying  similar steps from \eqref{eq:RateEvamUpper:OP10c}-\eqref{eq:10a:6a} for \eqref{eq:OP10:d} yields
\begin{IEEEeqnarray}{rCl}\label{eq:RateEvamUpper:OP10d1}
\frac{\mathtt{a}(\tilde{t}_{m,i,\ell}^{(\kappa)})}{\alpha_i} + \mathtt{b}(\tilde{t}_{m,i,\ell}^{(\kappa)})\mathcal{W}^{(\kappa)}(\tilde{t}_{m,i,\ell},\alpha_i) \leq \Gamma_{i,\ell}^{\mathtt{U}}, \forall m,i,\ell,&&\IEEEyessubnumber\IEEEyessubnumber\qquad\\
\begin{bmatrix}
\tilde{t}_{m,i,\ell}      & \rho_{i,\ell}\mathbf{g}_{m,i,\ell}\\
   \rho_{i,\ell}\mathbf{g}_{m,i,\ell}^H     & \boldsymbol{\Sigma}_{m,i}^{(\kappa)} + \sigma^2\mathbf{I}_{N_{e,m}}
\end{bmatrix}\succeq \mathbf{0},
 \forall m,i,\ell&&\IEEEyessubnumber
\end{IEEEeqnarray}
where $\tilde{t}_{m,i,\ell} > 0$  are new variables.

In Algorithm \ref{algo:worstcase}, we propose a path-following algorithm to solve the SRM-WCS problem \eqref{eq:OP9}. At the $\kappa$-th iteration, it solves the following convex program: 
\begin{IEEEeqnarray}{rCl}\label{eq:OP11}
&&\underset{\mathbf{X},\eta,\boldsymbol{\Gamma},\boldsymbol{\alpha},\boldsymbol{t}}{\maxi}\quad \eta \IEEEyessubnumber\label{eq:OP11:a}\\
&&\st\; \eqref{eq:OP1:e}, \eqref{eq:changevariables:b},  \eqref{eq:OP4:e}, \eqref{eq:poscondi}, \eqref{eq:R1ktrust},   \eqref{eq:R1kConvex},  \nonumber\\
 &&\qquad \eqref{eq:R1qConvexappro},\eqref{eq:PowerAppro},  \eqref{eq:RateEvamUpper:OP10ca},     \eqref{eq:10a:6a}, \eqref{eq:RateEvamUpper:OP10trust}, \eqref{eq:RateEvamUpper:OP10d1},  \IEEEyessubnumber\label{eq:OP11:b}\\
&&\qquad t_{m,i,k}>0,  \tilde{t}_{m,i,\ell} >0,\ \forall m, i,k,\ell \qquad \IEEEyessubnumber\label{eq:OP11:c}
\end{IEEEeqnarray}
where $\boldsymbol{t}\triangleq\{t_{m,i,k}, \tilde{t}_{m,i,\ell}\}_{m\in\mathcal{M},i\in\mathcal{I},k\in\mathcal{K},\ell\in\mathcal{L}}$, to generate a sequence $\{\mathbf{X}^{(\kappa+1)},\boldsymbol{\alpha}^{(\kappa+1)},\boldsymbol{t}^{(\kappa+1)}\}$ (and hence $\{\mathbf{X}^{(\kappa+1)},\boldsymbol{\tau}^{(\kappa+1)}\}$) of
improved points of \eqref{eq:OP9}, which also converges to a KKT
point.
\begin{algorithm}[t]
\begin{algorithmic}[1]
\protect\caption{Proposed Path-Following  Algorithm to Solve SRM-WCS  \eqref{eq:OP9}}
\label{algo:worstcase}
\global\long\def\algorithmicrequire{\textbf{Initialization:}}
\REQUIRE  Set $\kappa:=0$ and generate an initial feasible point $(\mathbf{X}^{(0)},\boldsymbol{\alpha}^{(0)},\boldsymbol{t}^{(0)})$.
\REPEAT
\STATE Solve \eqref{eq:OP11} to obtain the optimal solution ($\mathbf{X}^{\star},\eta^{\star},\boldsymbol{\Gamma}^{\star},\boldsymbol{\alpha}^{\star},\boldsymbol{t}^{\star}$).
\STATE Update $\mathbf{X}^{(\kappa+1)}:=\mathbf{X}^{\star},\boldsymbol{\alpha}^{(\kappa+1)}:=\boldsymbol{\alpha}^{\star},\boldsymbol{t}^{(\kappa+1)}:=\boldsymbol{t}^{\star}$.
\STATE Set $\kappa:=\kappa+1.$
\UNTIL Convergence\\
\end{algorithmic} \end{algorithm}

\section{Numerical Results}\label{NumericalResults}
 
\begin{table}[t]
\caption{Simulation Parameters}
	\label{parameter}
	\centering
	{\setlength{\tabcolsep}{0.2em}
\setlength{\extrarowheight}{0.1em}
		\begin{tabular}{l|l}
		\hline
				Parameter & Value \\
		\hline\hline
		    Carrier center  frequency/ System bandwidth                            & 2 GHz/ 10 MHz \\
				Distance between the FD-BS and  nearest user  & $\geq$ 10 m\\
				Noise power spectral density at the receivers & -174 dBm/Hz \\
				PL model for LOS communications,  $\mathrm{PL}_{\mathtt{LOS}}$ & 103.8 + 20.9$\log_{10}(d)$ dB\\
				PL model for NLOS communications,  $\mathrm{PL}_{\mathtt{NLOS}}$& 145.4 + 37.5$\log_{10}(d)$ dB\\
				Power budget at the FD-BS, $P_{bs}^{\max}$     & 26 dBm\\
				Power budget at  UL users, $P_{i,\ell}^{\max} \equiv P_{\mathtt{U}}^{\max}$     & 23 dBm \\
			  Degree of residual SI, $\sigma_{\mathtt{SI}}$ & -75 dB\\
				Number of antennas at  the FD-BS, $N_{\mathtt{t}} = N_{\mathtt{r}}$ & 5\\
				Threshold of all UL users, $\bar{\mathtt{R}}_{i,\ell}^{\mathtt{U}}\equiv \bar{\mathtt{R}}^{\mathtt{U}}$ & 2 bps/Hz\\
		\hline		   				
		\end{tabular}}
\end{table}

\begin{figure}[t]
\centering
        \includegraphics[width=0.40\textwidth]{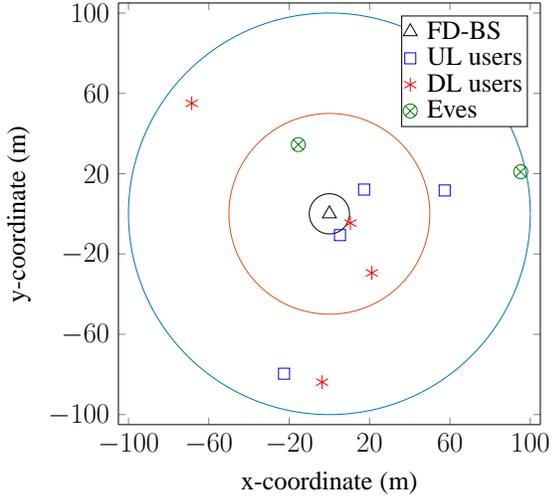}
	  \caption{A small cell topology with four DL users $(K=2)$, four UL users $(L=2)$ and $M=2$ Eves is used in the numerical examples. The radius of the small cell is set to  100 m with an inner circle radius of 50 m. Two DL users and two UL users are randomly located in zone-1 (inner zone) and the remaining two DL users and two UL users are randomly located in  zone-2 (outer zone). There is one $N_{e,m} = 2$-antenna Eve that is randomly placed in each zone. }\label{fig:celllayout}
\end{figure}

We now  evaluate the performance of the proposed FD model using realistic parameters. We consider the system topology illustrated in Fig. \ref{fig:celllayout}. There are two DL users, two UL users, and one $N_{e,m} = 2$-antenna Eve placed in each zone for a small cell topology. Unless stated otherwise,  most important parameters following FD radio evaluation methodology \cite{Bharadia13,Duarte:TWC:12} agreed in the 3GPP \cite{3GPP} are specified in Table~\ref{parameter}  for ease of cross-referencing. Here, the power budgets of all UL users are set to be equal, i.e. $P_{i,\ell}^{\max} \equiv P_{\mathtt{U}}^{\max}$. The entries of the fading loop channel $\mathbf{G}_{\mathtt{SI}}$ are  generated as independent and identically distributed Rician random variables with Rician factor $K_{{\mathtt{SI}}}=5$ dB. The channel  from  UL user ($i,\ell$) to DL user ($i,k$) (CCI) at a distance $d$  in km is assumed to undergo  path loss (PL) model for non-line-of-sight (NLOS) communications as $f_{i,k,\ell} = \sqrt{10^{-\mathrm{PL}_{\mathtt{NLOS}}/10}}\tilde{f}_{i,k,\ell}$, where $\mathrm{PL}_{\mathtt{NLOS}}$ is the PL in dB and $\tilde{f}_{i,k,\ell}$ follows $\mathcal{CN}(0,1)$ representing small-scale effects \cite{Dan:TWC:14,3GPP}. All other channels follow the PL model for line-of-sight (LOS) communications as $\mathbf{L} = \sqrt{10^{-\mathrm{PL}_{\mathtt{LOS}}/10}}\tilde{\mathbf{L}}$, where $\mathbf{L}\in\{\mathbf{h}_{i,k}, \mathbf{g}_{i,\ell},\mathbf{H}_m,\mathbf{g}_{m,i,\ell}\}$ and the entries of $\tilde{\mathbf{L}}$ follow $\mathcal{CN}(0,1)$. Herein, we have favored the channel quality of Eves due to their capability to select a good location to avoid a high possibility of obstruction.
The error tolerance between two consecutive iterations  in the proposed Algorithms is set to $\epsilon  = 10^{-3}$. The achieved SR results in nats/sec/Hz are divided by $\ln(2)$ to have at  units of bps/channel-use. We use MOSEK as the convex solver with  the toolbox YALMIP \cite{Lofberg} in the MATLAB environment.  Results  are obtained by averaging over 1,000 runs.

For comparison, the following three existing schemes are considered:
\begin{itemize}
   \item  We consider  a conventional FD system, under which all DL and UL users are simultaneously served in the whole communication time block (i.e., without considering fractional times and user grouping \cite{SunTWC16,Dan:TWC:14}). We call this scheme ``Conventional FD.''
	\item Under the same system model with ``Conventional FD,'' the DL transmission can adopt NOMA \cite{DSP16,Choi15,Nguyen:JSAC:17} to further improve its  performance. Here, each DL user in zone-1 is paired with a DL user in zone-2 to create a virtual cluster by using the clustering algorithm in  \cite{Choi15}. In each virtual cluster, the message intended for the DL user in zone-2 is decoded by both users while  the message intended for the DL user in zone-1 is  successively decoded by itself after processing SIC to cancel the interference of the former \cite{Nguyen:JSAC:17}. Since the  FD-BS already adopted MMSE-SIC decoder for the UL reception of UL users, we call this scheme ``FD-NOMA.'' Note that the achievable rate of DL users can be improved by considering a larger cluster size  \cite[Sec. VI-D]{Nguyen:JSAC:17}, but their information privacy is more exposed \cite{NguyenCLFT17}.
	\item Additionally, an HD system is considered. Here, HD-BS uses all available antennas $N = N_{\mathtt{t}}+N_{\mathtt{r}}$ and the same power budget with FD-BS to serve all DL users in the DL and all UL users in the UL, albeit in two separate  communication time blocks. In such a case,  there are no SI and CCI, but the effective SR suffers from a reduction by  half.
\end{itemize}
To conduct a fair comparison, the BS in all those schemes also injects an AN in  DL transmission and adopts an MMSE-SIC decoder for the reception of the UL signals with the same decoding order as previously presented. In parallel to the max-min SR among all DL  and UL users (e.g., \eqref{eq:OP1}), we also plot the max-min SR for DL users only but subject to UL users' SR requirements (i.e., $\bar{\mathtt{R}}_{i,\ell}^{\mathtt{U}}\equiv \bar{\mathtt{R}}^{\mathtt{U}}, \forall (i,\ell)\in\mathcal{U}$ listed in Table \ref{parameter}) as mentioned in Remark \ref{Remark1}. For convenience, the SR obtained by the former is referred to as ``max-min SR'' while that of the latter (i.e., by \eqref{eq:OP1a}) is referred to as  ``max-min SR of DL users.'' It is obvious that the optimization problems  corresponding to the above schemes can be solved using our proposed methods after some modifications.

\vspace{-0.2cm}
\subsection{Numerical Results for  Known CSI \eqref{eq:OP1}}

\begin{figure}
    \begin{center}
				\begin{subfigure}[Average max-min SR versus $P_{bs}^{\max}$.]{
        \includegraphics[width=0.42\textwidth]{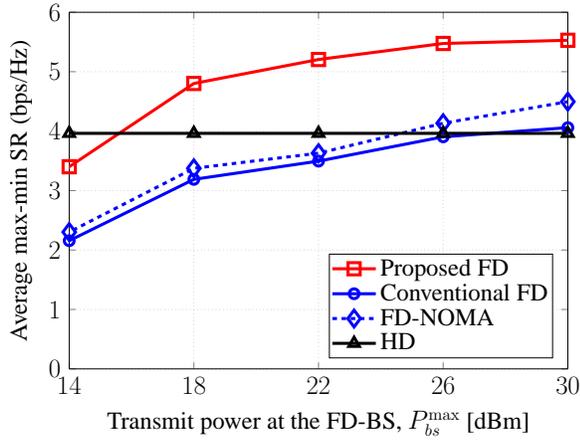}}
    		\label{fig:SRPCSI:PvsandSI:a}
				\end{subfigure}
				\begin{subfigure}[Average max-min SR versus $\sigma_{\mathtt{SI}}$.]{
        \includegraphics[width=0.42\textwidth]{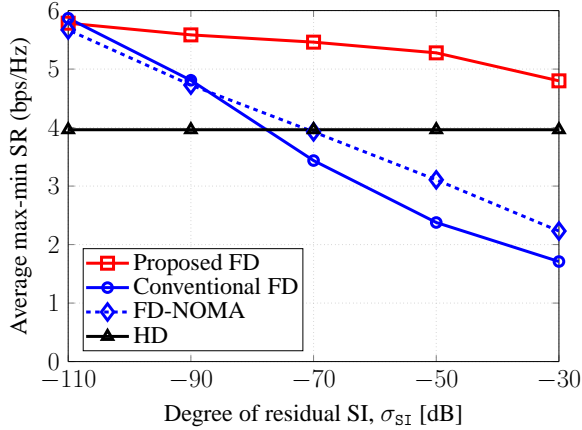}}
    		\label{fig:SRPCSI:PvsandSI:b}
				\end{subfigure}
	  \caption{Average max-min SR with known CSI (a) versus the transmit power at the FD-BS and (b) versus the degree of residual SI.}\label{fig:SRPCSI:PvsandSI}
\end{center}
\end{figure}

Fig.~\ref{fig:SRPCSI:PvsandSI}(a) depicts the average max-min SR versus the FD-BS transmit power  with known CSI for different resource allocation schemes.
The observations from the figure are as follows. First, one can see that the SR for the FD systems is better than that of HD system at a high transmit power $P_{bs}^{\max}$, and the SR curve of HD is nearly unchanged. The reasons for these results are three-fold: 1) The effective SR of the HD per resource block is divided by two; 2) The DL transmission in HD dominates the UL one, as the UL transmission is free of AN and due to the effectiveness of DL beamforming; 3) In the FD systems,  FD-BS can better protect both  DL and UL transmissions by exploiting  DL interference. Second, the SR of FD-NOMA outperforms conventional FD, which is a result
of canceling out intra-cluster interference. Third, the SR of the proposed FD is fully superior  to  the other schemes and an improvement of almost 1.51 bps/Hz ($\approx$ 38.2$\%$) over HD  is achieved at the practical value of $P_{bs}^{\max} = 26$ dBm, which is defined in 3GPP TS 36.814. We recall that the proposed FD is advantageous over other schemes in terms of handling interference.

In Fig.~\ref{fig:SRPCSI:PvsandSI}(b), we plot the average max-min SR as a function of $\sigma_{\mathtt{SI}}$. We can see that the proposed FD scheme offers a significant gain over  traditional FD schemes in terms of achievable SR. When $\sigma_{\mathtt{SI}}$ becomes stringent, it is even more essential. Although a major part of the power budget must be allocated to serve far DL users in improving their
SR leading to a significant effect of the SI, the near UL users' SR requirement is easily met. In other words, the effect of SI in the proposed FD becomes less due to the effectiveness of user grouping method. As $\sigma_{\mathtt{SI}}$ increases, the SRs of conventional FD and FD-NOMA  drop quickly and tend to be worse than those of HD when $\sigma_{\mathtt{SI}}$ is higher than a certain level. These results are probably attributed to the fact that
the FD-BS in those schemes needs to scale down its transmit power to maintain the UL users' QoS, which leads to a significant loss in the system performance. In addition, Fig.~\ref{fig:SRPCSI:PvsandSI}(b) further shows that the degree of residual SI needs to be suppressed by at least 72 dB for the FD-NOMA and at least 78 dB for the conventional FD to guarantee a better SR per-user compared to  HD. Interestingly, the SR of the proposed FD outperforms the HD for a broad range of $\sigma_{\mathtt{SI}}$, which confirms its robustness against the significant effect of SI.

\begin{figure}
    \begin{center}
				\begin{subfigure}[Average max-min SR of DL users versus $\bar{\mathtt{R}}^{\mathtt{U}}$ for $N_{\mathtt{t}}=N_{\mathtt{r}}=5$.]{
        \includegraphics[width=0.42\textwidth]{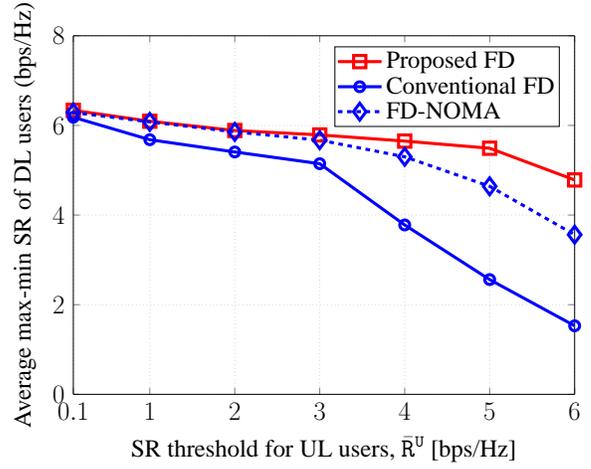}}
    		\label{fig:SRPCSI:Ru5Nt}
				\end{subfigure}
				\begin{subfigure}[Average max-min SR of DL users versus $\bar{\mathtt{R}}^{\mathtt{U}}$ for $N_{\mathtt{t}}=N_{\mathtt{r}}=3$.]{
        \includegraphics[width=0.42\textwidth]{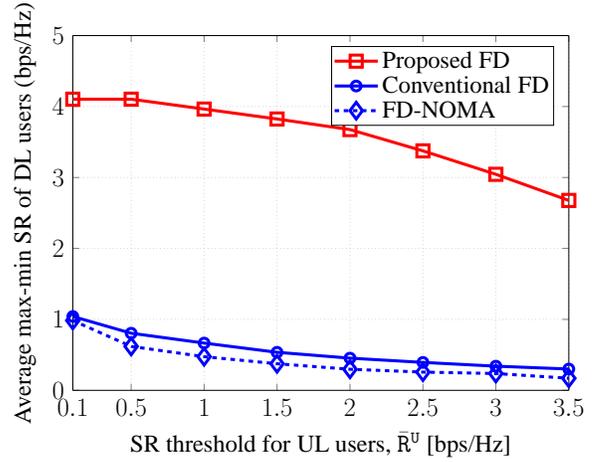}}
    		\label{fig:SRPCSI:Ru3Nt}
				\end{subfigure}
	  \caption{Average max-min SR of DL users versus the SR threshold for UL users with known CSI (a) for  $N_{\mathtt{t}}=N_{\mathtt{r}}=5$ and (b) for $N_{\mathtt{t}}=N_{\mathtt{r}}=3$.}\label{fig:SRPCSI:Ru}
\end{center}
\end{figure}

A trade-off of the average max-min SR between DL  and UL users is illustrated in Fig. \ref{fig:SRPCSI:Ru} for different numbers of antennas at the FD-BS by solving \eqref{eq:OP1a}. The results of the performance  for HD are not shown here due to the independence of two resource blocks. In Fig. \ref{fig:SRPCSI:Ru}(a) for $N_{\mathtt{t}}=N_{\mathtt{r}}=5$, we observe that the SRs of all schemes  constantly decrease with an increase in $\bar{\mathtt{R}}^{\mathtt{U}}$ since the feasible set gets more restricted. For a high demand on the UL transmission,  FD-BS must  scale down its transmit power to avoid severe  interference to the UL reception, leading to a drastic reduction in the SR for the DL transmission.  As expected, the proposed FD scheme achieves better performance in terms of the SR compared to the others, especially in the  range of $\bar{\mathtt{R}}^{\mathtt{U}}\geq 4$ bps/Hz.  For $N_{\mathtt{t}}=N_{\mathtt{r}}=3$, Fig. \ref{fig:SRPCSI:Ru}(b)  demonstrates the advantage of the proposed FD scheme when the number of users is larger than the number of transmit/receive antennas. As can be seen,  traditional FD schemes cannot deliver a good SR for the given setup, owing to  a lack of the DoF to leverage multiuser diversity (recall  $N_{\mathtt{t}} = 3 < 4$ DL users and $N_{\mathtt{r}} = 3 < 4$ UL users).  The UL users' QoS ability  of the FD-NOMA is inferior to the conventional FD due to inefficiency of using SIC in this setting. In contrast, the proposed FD scheme still has sufficient DoF to transmit and receive the information signals in both directions (at a specific time,  FD-BS concurrently serves only 2 DL  and 2 UL users). Consequently, this results in a substantial improvement, of about 3.37 bps/Hz and 3.22 bps/Hz at $\bar{\mathtt{R}}^{\mathtt{U}}$ = 2 bps/Hz, in the SR of the proposed FD scheme when compared with FD-NOMA and  conventional FD, respectively.

\subsection{Numerical Results for SCSI of Eves \eqref{eq:OP7}}

\begin{figure}
    \begin{center}
				\begin{subfigure}[Average max-min SR versus $P_{bs}^{\max}$.]{
        \includegraphics[width=0.42\textwidth]{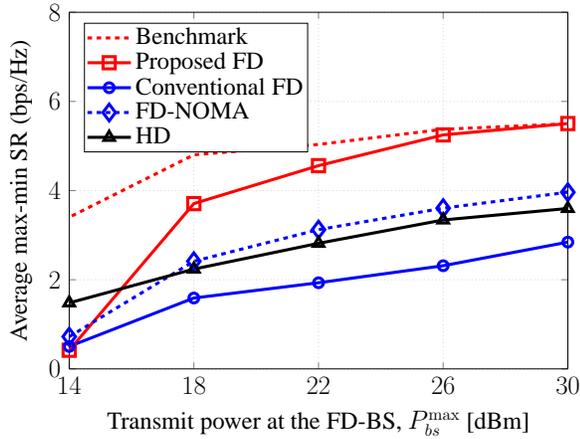}}
    		\label{fig:SCSIvsPbs:a}
				\end{subfigure}
				\begin{subfigure}[Average max-min SR of DL users versus $P_{bs}^{\max}$ for $\bar{\mathtt{R}}^{\mathtt{U}} = 2$ bps/Hz.]{
        \includegraphics[width=0.42\textwidth]{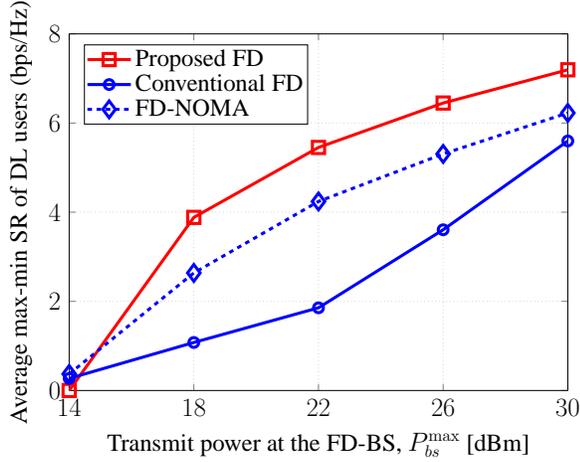}}
    		\label{fig:SCSIvsPbs:b}
				\end{subfigure}
	  \caption{(a) Average max-min SR per-user and (b) average max-min SR of DL users for $\bar{\mathtt{R}}^{\mathtt{U}} = 2$ bps/Hz, versus the transmit power at the FD-BS with SCSI for Eves.}\label{fig:SCSIvsPbs}
\end{center}
\end{figure}

In this subsection, we show numerical results for the SRM-SCSI problem \eqref{eq:OP7}. Under the same simulation setup as in the previous subsection, we set $\epsilon_{i,k}  = 0.99, \forall (i,k)\in\mathcal{D}$ and $\tilde{\epsilon}_{i,\ell} = 0.99, \forall (i,\ell)\in\mathcal{U}$ to guarantee secure communications in both directions. In Fig.~\ref{fig:SCSIvsPbs}(a), we also plot a benchmark  of the proposed FD scheme, assuming  perfect CSI for the Eves. The results in Fig.~\ref{fig:SCSIvsPbs}(a) reveal that the SRs of all resource allocation schemes degrade compared to Fig.~\ref{fig:SRPCSI:PvsandSI}(a), which is due to a lack of CSI on Eves. In other words, the perfectly known CSI at the transmitters plays a vital role in designing  effective beamforming. Otherwise, this will result in the cost of a reduced system performance. Notably, the SRs of the proposed FD and  FD-NOMA schemes have fewer loss than the others due to  efficient proposed design and SIC, respectively.  The SR of the proposed FD scheme also approaches  that of the benchmark  when $P_{bs}^{\max}$ increases. This is because the proposed FD scheme aims to manage the network interference to improve the SR rather than concentrating the interference at Eves. At $P_{bs}^{\max} = 26$ dBm, significant gains of up to 126.8$\%$, 57.1$\%$ and 45.5$\%$ are offered by the proposed FD scheme compared to  conventional FD,  HD and  FD-NOMA, respectively. These results confirm that the proposed FD scheme is more robust and reliable in the presence of imperfect CSI of Eves compared to the others.

The average max-min SR of the DL users versus the FD-BS transmit power  is given in Fig.~\ref{fig:SCSIvsPbs}(b) for $\bar{\mathtt{R}}^{\mathtt{U}} = 2$ bps/Hz. As can be observed, the SRs of all schemes grow very rapidly when $P_{bs}^{\max}$  increases. The reasons behind this behavior are as follows. 1) The UL users can easily tune the power  in meeting their
 QoS requirements (a loose QoS requirement) to avoid strong CCI to the DL users; 2) The FD-BS now pays more attention to serve the DL users by transferring more power to them with  less attention to the SI. Again, the proposed FD scheme achieves much better SR compared to the traditional FD schemes.

\subsection{Numerical Results for Worst-Case Scenario \eqref{eq:OP9}}

\begin{figure}
    \begin{center}
				\begin{subfigure}[Average max-min SR versus the number of antennas at the FD-BS, $N_{\mathtt{t}}=N_{\mathtt{r}}$.]{
        \includegraphics[width=0.41\textwidth]{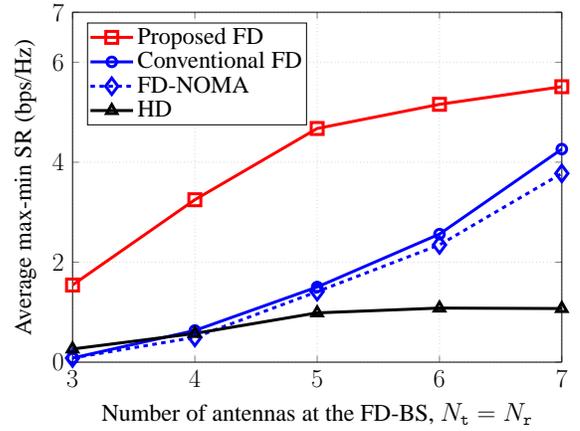}}
    		\label{fig:WCSvsN:a}
				\end{subfigure}
				\begin{subfigure}[Average max-min SR versus the number of users per zone, $K = L$.]{
        \includegraphics[width=0.41\textwidth]{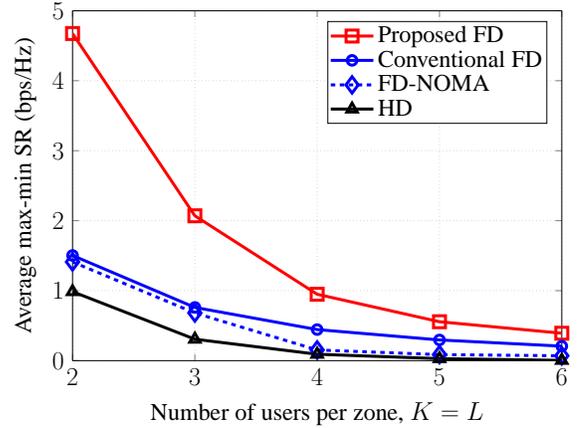}}
    		\label{fig:WCSvsKL:b}
				\end{subfigure}
	  \caption{Average max-min SR with worst-case scenario (a) versus the number of antennas at the FD-BS and (b) versus the number of users per zone.}\label{fig:WCSvsNKL}
\end{center}
\end{figure}

\begin{figure}[t]
\centering
        \includegraphics[width=0.41\textwidth]{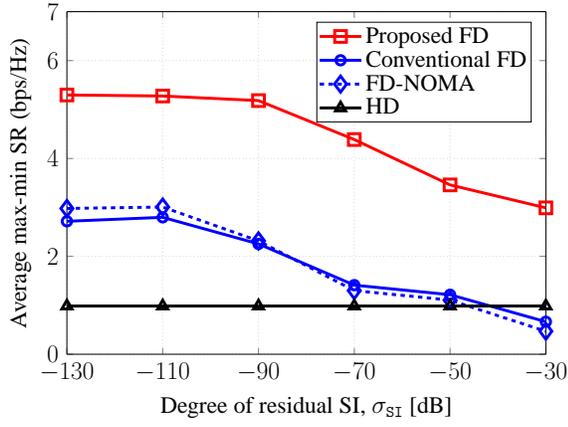}
	  \caption{Average max-min SR with worst-case scenario versus the degree of residual SI.}\label{fig:WCSvsSI}
\end{figure}

For the SRM-WCS \eqref{eq:OP9}, we plot the  average max-min SR  versus the number of antennas at the FD-BS in Fig.~\ref{fig:WCSvsNKL}(a) and versus the number of users per zone in Fig.~\ref{fig:WCSvsNKL}(b). As expected,  the SRs of all resource allocation
schemes shown in Fig.~\ref{fig:WCSvsNKL}(a)  degrade (i.e., compared to Fig.~\ref{fig:SRPCSI:PvsandSI}(a) at $N_{\mathtt{t}}=N_{\mathtt{r}}=5$), and this is even more  drastic for  HD. It is easy to see that the UL-user-to-Eve links make the HD scheme more vulnerable to interception than the FD ones, since  the UL transmission is free of both MUI and AN. In contrast,  FD-BS with AN can better protect the information signals in both directions, which further confirms the importance of using AN. As seen, for $N_{\mathtt{t}}=N_{\mathtt{r}} < 4$, the SRs of  FD-NOMA and conventional FD are inferior to  HD; however, as $N_{\mathtt{t}}$ and $N_{\mathtt{r}}$ increase, the benefit of exploiting FD radio outweighs the lack of the DoF to leverage multiuser diversity. Another interesting observation is that the SR of FD-NOMA is slightly worse  than that of the conventional FD. The reason for this is because the FD-BS must allocate  a major part of the power budget to produce AN leading to  less power to convey the desired signals, which may cause  error propagation in SIC.  In Fig.~\ref{fig:WCSvsNKL}(b),  we see that increasing the number of users
severely deteriorates the SR of all schemes. For fixed dimensionality of the beamforming vectors and large number of users, the BSs are not able to  suppress the MUI effectively and the power allocated to each DL user is significantly reduced, compromising the SR per user. On the other hand, the SRs of FD-NOMA and HD are close to zero for $K = L \geq 4$, which can be  intuitively explained as follows. In order to perform SIC effectively in FD-NOMA, FD-BS  must align the transmit signal of far DL users around near DL users, which in turn may result in strong interference at the unintended DL users and Eves in zone-1. In HD, increasing the number of UL users also makes a higher probability of the signal leakage into Eves. Nevertheless, the proposed FD scheme still achieves the best SR by exploiting the spatial  DoF more efficiently.

Fig.~\ref{fig:WCSvsSI} shows the impact of the degree of residual SI on the performance of the system. We observe that the  residual SI needs to be canceled no less than 45 dB (i.e., $\sigma_{\mathtt{SI}} < -45$ dB) for traditional FD schemes to achieve a better SR compared to  HD. From Fig.~\ref{fig:WCSvsSI} and recalling the
discussions presented for Fig.~\ref{fig:SRPCSI:PvsandSI}(b), it can be seen that the proposed FD scheme can  utilize the transmission power more efficiently, and thus, it achieves remarkable gains compared to the others.

\subsection{The Importance of Using AN}
To examine when AN is important for the proposed designs, we measure the percentage of AN's power to the total transmit power at the FD-BS, $P_{bs}^{\max}$, which is defined as $\frac{\sum_{i=1}^2\tau_i\|\mathbf{V}_i\|_\F^2}{P_{bs}^{\max}}\times 100\%$.

\begin{figure}[t]
\centering
        \includegraphics[width=0.465\textwidth]{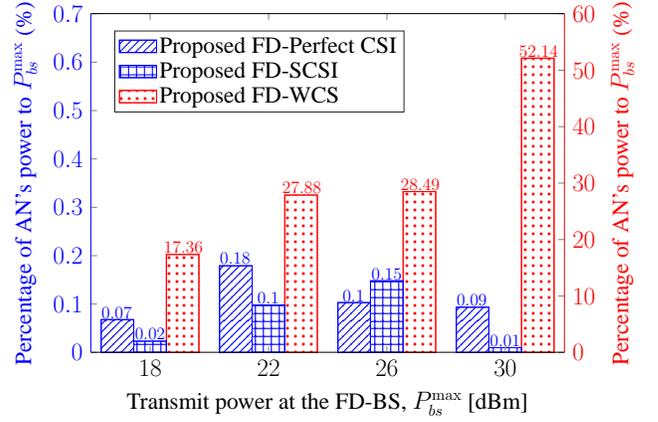}
	  \caption{The percentage of AN's power to $P_{bs}^{\max}$  versus $P_{bs}^{\max}$  for three proposed FD designs: perfect CSI and Eves' SCSI (left hand y-axis), and WCS (right hand y-axis).}\label{fig:PerANtoPbsvsPbs}
\end{figure}

In Fig.~\ref{fig:PerANtoPbsvsPbs}, we show the percentage of AN's power versus  the FD-BS transmit power for three proposed FD designs with the problem of max-min SR. As seen, for the proposed FD designs with perfect CSI and Eves' SCSI, the use of AN is not crucial. A very small portion (i.e., less than 0.18$\%$) of the total transmit power is allocated to AN. Since the SR of DL users is virtually higher than that of UL users, the max-min SR will be determined by the UL users. For these designs, the DL multiuser interference can effectively debilitate the Eves' reception, and there is no need to allocate a high power on AN. However, AN becomes important for   the proposed FD design with WCS. When the FD-BS transmit power increases,  the percentage of AN's power  increases significantly, i.e.,  28.49$\%$ at $P_{bs}^{\max} = 26$ dBm and  reaching 52.14$\%$ at $P_{bs}^{\max} = 30$ dBm.  In this case, Eves are capable of suppressing the
AN by the MMSE decoder and canceling all the MUI, and thus, more power should be allocated to AN to achieve higher SR. On the other hand, the optimal value for the FT $\tau_i, i=1,2$ is about 0.5 on the average for this symmetric setting, but can be changed depending  on the demand of each user group.

\begin{figure}[t]
\centering
        \includegraphics[width=0.40\textwidth]{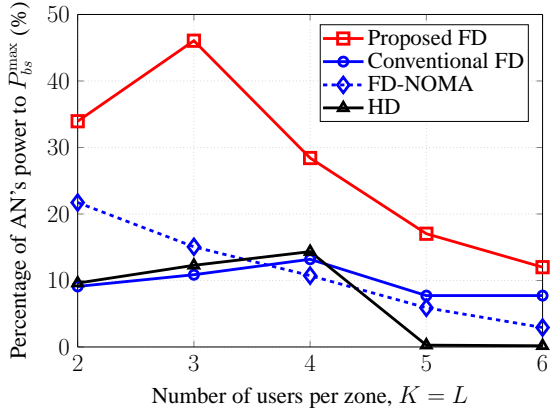}
	  \caption{The percentage of AN's power to $P_{bs}^{\max}$  with worst-case scenario versus the number of users per zone.}\label{fig:PerANtoPbsvsKL}
\end{figure}

Under the same setting with worst-case scenario as Fig.~\ref{fig:WCSvsNKL}(b), we show the percentage of AN's power versus the number of users per zone in Fig.~\ref{fig:PerANtoPbsvsKL}. The percentage of AN's power of all schemes decreases from a certain value of $K$ and $L$. For large $K$ and $L$, more power needs to be allocated to information symbols to guarantee a high power received at DL users, resulting in less power on  AN. Interestingly,  the proposed FD scheme with more DoF can still allocate  a suitable portion of power budget to produce AN to achieve the best SR, as shown in Fig.~\ref{fig:WCSvsNKL}(b).

\subsection{Algorithm Convergence}
\begin{figure}[t]
\centering
        \includegraphics[width=0.40\textwidth]{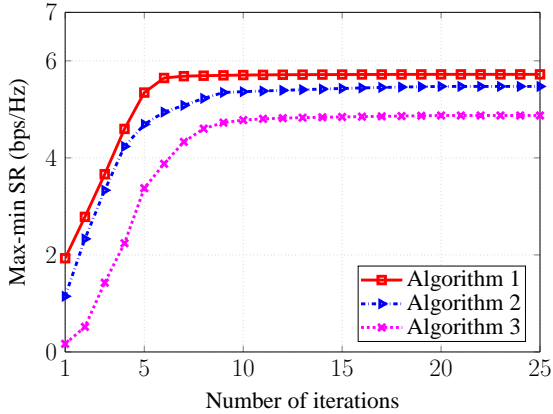}
	  \caption{Convergence of the algorithms.  }\label{fig:Convergence}
\end{figure}

Fig. \ref{fig:Convergence} plots the typical convergence results of Algorithms \ref{algo:knownCSI} to \ref{algo:worstcase}  for  a randomly generated channel realization. We can see that all the proposed algorithms converge very fast to the optimal solution within  about 10 iterations. The optimization 
variables are accounted for to find a better solution for the next iteration.   Intuitively, the SRs of all algorithms
increase quickly within the first iterations and stabilize after a few more iterations. This is because in the first iterations, the approximation errors are large. However, when the algorithms reach  the optimal solution, the approximation errors  become small due to updating  the involved optimization variables after each iteration.

\section{Conclusion}\label{Conclusion}

We have addressed the problem of secure FD multiuser wireless communication. To manage the network interference more  effectively, a simple and very efficient user grouping-based fractional time model has been proposed. Depending on how much the transmitters know about the Eves'
CSI,  three difficult nonconvex optimization problems have been considered: $(i)$ SRM with known CSI, $(ii)$ SRM with only  Eves' SCSI  and $(iii)$ SRM with WCS.  We have developed new path-following optimization algorithms to jointly design the fractional times and power resource allocation to maximize the SR per-user in both DL and UL directions. Specifically, we have first transformed the nonconvex optimization problem to a tractable form and then solved a  sequence of convex programs with polynomial computational complexity in each iteration. Numerical results with realistic parameters have confirmed  fast convergence to at least local optima of the original nonconvex design problems. They have revealed that the proposed FD scheme not only provides substantial  improvement in terms of SR
when compared to the known solutions  (i.e.,  HD, conventional FD and FD-NOMA), but is also robust  against the significant effects of SI, imperfect CSI on Eves  and  DoF bottleneck.

\section*{Appendix~A\\ \quad\ Proof of Inequality \eqref{inq1}} \label{Appendix:A}
\renewcommand{\theequation}{A.\arabic{equation}}
\setcounter{equation}{0}
Considering the function $\zeta(z,t) \triangleq \ln(1+1/z)/t$, we have $\zeta(z,t) = -\ln\bigl(1-1/(1+z)\bigr)/t$. It is clear that  $-\ln\bigl(1-1/(1+z)\bigr)$ is convex function and decreasing on the domain $z > 0$, while the function $1/t$ is convex on the domain $t > 0$. Therefore, the composite function $\zeta(z,t)$ is convex on the domain $z>0,t>0$ \cite{Nguyen:TCOM:17,Tuybook}. Thus, it is true that \cite{Tuybook}: 
\begin{eqnarray}\label{eq:A1}
\zeta(z,t)&\geq&\ds \zeta(z^{(\kappa)},t^{(\kappa)}) + \Bigl\langle\bigl[\nabla_{z}\zeta(z,t)|_{(z^{(\kappa)},t^{(\kappa)})},\ \nonumber\\
&&\nabla_{t}\zeta(z,t)|_{(z^{(\kappa)},t^{(\kappa)})}\bigr]\bigl[(z-z^{(\kappa)}),\ (t-t^{(\kappa)})\bigr]^T \Bigr\rangle\nonumber\\
&=& 2\zeta(z^{(\kappa)},t^{(\kappa)}) + \frac{1}{t^{(\kappa)}(z^{(\kappa)}+1)}  \nonumber\\
&& \quad-\; \frac{1}{t^{(\kappa)}z^{(\kappa)}(z^{(\kappa)}+1)}z - \frac{\zeta(z^{(\kappa)},t^{(\kappa)})}{t^{(\kappa)}}t
\end{eqnarray}
where the notation $(\cdot)|_{(z^{(\kappa)},t^{(\kappa)})}$ is used to represent the value of the function at ${(z^{(\kappa)},t^{(\kappa)})}$.  The inequality \eqref{inq1} is then obtained by substituting $\gamma = z^{-1}$ and $\gamma^{(\kappa)} = (z^{(\kappa)})^{-1}$ into \eqref{eq:A1}.

\section*{Appendix~B\\ \quad  Derivations of $\mathtt{F}_{m,i,k}^{(\kappa)}(\mathbf{X}_i,\alpha_i,\mu_{m,i,k})$ and $\mathtt{P}_{m,i,\ell}^{(\kappa)}(\mathbf{X}_i,\alpha_i,\tilde{\mu}_{m,i,\ell})$ } \label{Appendix:B}
\renewcommand{\theequation}{B.\arabic{equation}}
\setcounter{equation}{0}
For the concave function $\sqrt{yz}$, its convex upper bound can be found as \cite{Beck:JGO:10}
\begin{eqnarray}
\sqrt{yz} &\leq& \frac{\sqrt{y^{(\kappa)}}}{2\sqrt{z^{(\kappa)}}}z + \frac{\sqrt{z^{(\kappa)}}}{2\sqrt{y^{(\kappa)}}}y \label{B3}
\end{eqnarray}
with $\forall y > 0, y^{(\kappa)} > 0, z > 0, z^{(\kappa)} > 0$. The convex approximation  of $\mathtt{F}_{m,i,k}(\mathbf{X}_i,\alpha_i,\mu_{m,i,k})$
can be found as follows. The  first term $\mu_{m,i,k}/\alpha_i$ is convexified by using \eqref{B3} while the second term  is a quadratic function, which can be linearized for inner approximation by using \eqref{eq:inequad}.  As a result, we have
\begin{IEEEeqnarray}{rCl}
&& \mathtt{F}_{m,i,k}^{(\kappa)}(\mathbf{X}_i,\alpha_i,\mu_{m,i,k})\triangleq \frac{1}{2}\Bigl(\frac{\mu_{m,i,k}^2}{\mu_{m,i,k}^{(\kappa)}\alpha_i^{(\kappa)}} + \frac{\mu_{m,i,k}^{(\kappa)}}{2\alpha_i-\alpha_i^{(\kappa)}}\Bigr)  -\nonumber\\ 
&&\ \left[\mathtt{Q}_{m,i}^{(\kappa)}(\mathbf{X}_i) - 2\Re\bigl\{\bigl(\mathbf{w}_{i,k}^{(\kappa)}\bigr)^H\mathbf{H}_{m}\mathbf{H}_{m}^H\mathbf{w}_{i,k}\bigl\} + \|\mathbf{H}_{m}^H\mathbf{w}_{i,k}^{(\kappa)}\|^2\right]\label{B4}\qquad
\end{IEEEeqnarray}
where 
\begin{IEEEeqnarray}{rCl}
&& \mathtt{Q}_{m,i}^{(\kappa)}(\mathbf{X}_i)\triangleq 2\Biggl(\sum_{k'=1}^K \Re\Bigl\{\bigl(\mathbf{w}_{i,k'}^{(\kappa)}\bigr)^H\mathbf{H}_{m}\mathbf{H}_{m}^H\mathbf{w}_{i,k'}\Bigl\}\; +\nonumber\\
&& \Re\Bigl\{\tr\Bigl(\bigl(\mathbf{V}_{i}^{(\kappa)}\bigr)^H\mathbf{H}_{m}\mathbf{H}_{m}^H\mathbf{V}_{i}\Bigr)\Bigl\} + \sum_{\ell^{'}=1}^L\rho_{i,\ell^{'}}^{(\kappa)}\rho_{i,\ell^{'}}\|\mathbf{g}_{m,i,\ell^{'}}^H\|^2\Biggr)\; - \nonumber\\
&&\Bigl(\sum_{k'=1}^K\|\mathbf{H}_{m}^H\mathbf{w}_{i,k'}^{(\kappa)}\|^2+\|\mathbf{H}_{m}^H\mathbf{V}_{i}^{(\kappa)}\|_\F^2
 +\sum_{\ell'=1}^L\bigl(\rho_{i,\ell^{'}}^{(\kappa)}\bigr)^2\|\mathbf{g}_{m,i,\ell^{'}}^H\|^2\Bigr),\quad  \nonumber
\end{IEEEeqnarray}
and $\alpha_i^2$ is  linearized as $\alpha_i^{(\kappa)}(2\alpha_i-\alpha_i^{(\kappa)})$.
Analogously, the function  $\mathtt{P}_{m,i,\ell}(\mathbf{X}_i,\alpha_i,\tilde{\mu}_{m,i,\ell})$ is approximated by
\begin{IEEEeqnarray}{rCl}
 &&\mathtt{P}_{m,i,\ell}^{(\kappa)}(\mathbf{X}_i,\alpha_i,\tilde{\mu}_{m,i,\ell})\triangleq \frac{1}{2}\Bigl(\frac{\tilde{\mu}_{m,i,\ell}^2}{\tilde{\mu}_{m,i,\ell}^{(\kappa)}\alpha_i^{(\kappa)}} + \frac{\tilde{\mu}_{m,i,\ell}^{(\kappa)}}{2\alpha_i-\alpha_i^{(\kappa)}}\Bigr)  -\nonumber\\ 
&&\ \left[\mathtt{Q}_{m,i}^{(\kappa)}(\mathbf{X}_i) - 2\rho_{i,\ell}^{(\kappa)}\rho_{i,\ell}\|\mathbf{g}_{m,i,\ell}^H\|^2 + \bigl(\rho_{i,\ell}^{(\kappa)}\bigr)^2\|\mathbf{g}_{m,i,\ell}^H\|^2\right].    \label{B5}\qquad
\end{IEEEeqnarray}
 
\section*{Appendix~C\\ \quad Proof of Proposition \ref{pro:1} } \label{Appendix:C}
\renewcommand{\theequation}{C.\arabic{equation}}
\setcounter{equation}{0}
Let us prove  constraint \eqref{eq:changeOP3:a} corresponding to its approximation \eqref{eq:R1kConvex} first.
It holds that
\begin{IEEEeqnarray}{rCl}\label{eq:C1}
C_{i,k}^{\mathtt{D},(\kappa)}(\mathbf{X}_i,\alpha_i) \geq \eta + \Gamma_{i,k}^{\mathtt{D}}.
\end{IEEEeqnarray}
For any feasible point $\bigl(\mathbf{X}_i^{(\kappa)},\alpha_i^{(\kappa)}\bigr)\in\mathcal{V}^{(\kappa)}$ of \eqref{eq:changeOP3:a}, i.e.,
$C_{i,k}^{\mathtt{D}}\bigl(\mathbf{X}_i^{(\kappa)},\alpha_i^{(\kappa)}\bigr) \geq \eta + \Gamma_{i,k}^{\mathtt{D}}$,
it follows that 
\begin{IEEEeqnarray}{rCl}\label{eq:C3}
C_{i,k}^{\mathtt{D},(\kappa)}(\mathbf{X}_i^{(\kappa)},\alpha_i^{(\kappa)}) \geq \eta + \Gamma_{i,k}^{\mathtt{D}}\ (\text{due to \eqref{eq:R1kAgree}}).
\end{IEEEeqnarray}
This implies that $\bigl(\mathbf{X}_i^{(\kappa)},\alpha_i^{(\kappa)}\bigr)\in\mathcal{V}^{(\kappa)}$ is also feasible for \eqref{eq:C1} (and hence \eqref{eq:R1kConvex}). Therefore, the optimal solution $\bigl(\mathbf{X}_i^{(\kappa+1)},\alpha_i^{(\kappa+1)}\bigr)$ of \eqref{eq:C1} should satisfy \eqref{eq:changeOP3:a} because
\begin{IEEEeqnarray}{rCl}\label{eq:C4}
&&C_{i,k}^{\mathtt{D},(\kappa)}\bigl(\mathbf{X}_i^{(\kappa+1)},\alpha_i^{(\kappa+1)}\bigr) \geq \eta + \Gamma_{i,k}^{\mathtt{D}}\nonumber\\
\Leftrightarrow\; &&C_{i,k}^{\mathtt{D}}\bigr(\mathbf{X}_i^{(\kappa+1)},\alpha_i^{(\kappa+1)}\bigl) \geq \eta + \Gamma_{i,k}^{\mathtt{D}}.
\end{IEEEeqnarray}
The above result holds true for the remaining nonconvex constraints and their convex approximations whenever Algorithm \ref{algo:knownCSI} is initialized with $\bigl(\mathbf{X}^{(0)},\boldsymbol{\alpha}^{(0)},\boldsymbol{\mu}^{(0)}\bigr)$ satisfying the feasibility conditions of \eqref{eq:OP4}.

\section*{Appendix~D\\ \quad\  Proof of Proposition \ref{pro:2} } \label{Appendix:D}
\renewcommand{\theequation}{D.\arabic{equation}}
\setcounter{equation}{0}
For the first point of Proposition \ref{pro:2}, we focus on \eqref{eq:changeOP3:a} and the same arguments will be applicable to all remaining constraints. From \eqref{eq:Rate1ConvexAppro} and \eqref{eq:R1kAgree}, we recall that
\begin{IEEEeqnarray}{rCl}\label{eq:D1}
C_{i,k}^{\mathtt{D}}(\mathbf{X}_i,\alpha_i) &\geq& C_{i,k}^{\mathtt{D},(\kappa)}(\mathbf{X}_i,\alpha_i)\ \text{and}\ \nonumber\\
C_{i,k}^{\mathtt{D}}\bigl(\mathbf{X}_i^{(\kappa)},\alpha_i^{(\kappa)}\bigl) &=& C_{i,k}^{\mathtt{D},(\kappa)}\bigl(\mathbf{X}_i^{(\kappa)},\alpha_i^{(\kappa)}\bigr) 
\end{IEEEeqnarray}
and also for \eqref{eq:RateEvamUpper:a} and \eqref{eq:RateEvamUpper:a6}:
\begin{IEEEeqnarray}{rCl}\label{eq:D2}
C_{m,i,k}^{\mathtt{ED}}(\mathbf{X}_i,\alpha_i) &\leq& C_{m,i,k}^{\mathtt{ED},(\kappa)}(\mathbf{X}_i,\alpha_i,\mu_{m,i,k}) \leq \Gamma_{i,k}^{\mathtt{D}}\ \text{and}\ \nonumber\\
C_{m,i,k}^{\mathtt{ED}}\bigl(\mathbf{X}_i^{(\kappa)},\alpha_i^{(\kappa)}\bigr) &=& C_{m,i,k}^{\mathtt{ED},(\kappa)}\bigl(\mathbf{X}_i^{(\kappa)},\alpha_i^{(\kappa)},\mu_{m,i,k}^{(\kappa)}\bigr).
\end{IEEEeqnarray}
From \eqref{eq:changeOP3:a}, it follows that
\begin{IEEEeqnarray}{rCl}\label{eq:D3}
&&\eta^{(\kappa+1)}\triangleq\underset{(i,k)\in\mathcal{D}}{\min}\Bigl\{C_{i,k}^{\mathtt{D}}\bigl(\mathbf{X}_i^{(\kappa+1)},\alpha_i^{(\kappa+1)}\bigr) \nonumber\\&&\qquad\qquad\qquad - \underset{m\in\mathcal{M}}{\max}\,C_{m,i,k}^{\mathtt{ED}}\bigr(\mathbf{X}_i^{(\kappa+1)},\alpha_i^{(\kappa+1)}\bigl)\Bigr\}\nonumber\\
&&\quad\geq\underset{(i,k)\in\mathcal{D}}{\min}\Bigl\{C_{i,k}^{\mathtt{D},(\kappa)}\bigl(\mathbf{X}_i^{(\kappa+1)},\alpha_i^{(\kappa+1)}\bigr) \nonumber\\
&&\qquad\qquad\qquad- \underset{m\in\mathcal{M}}{\max}\,C_{m,i,k}^{\mathtt{ED},(\kappa)}\bigl(\mathbf{X}_i^{(\kappa+1)},\alpha_i^{(\kappa+1)},\mu_{m,i,k}^{(\kappa+1)}\bigr)\Bigr\}\nonumber\\
&&\quad\geq\underset{(i,k)\in\mathcal{D}}{\min}\Bigl\{C_{i,k}^{\mathtt{D},(\kappa)}\bigl(\mathbf{X}_i^{(\kappa)},\alpha_i^{(\kappa)}\bigr) \nonumber\\
&&\qquad\qquad\qquad- \underset{m\in\mathcal{M}}{\max}\, C_{m,i,k}^{\mathtt{ED},(\kappa)}\bigl(\mathbf{X}_i^{(\kappa)},\alpha_i^{(\kappa)},\mu_{m,i,k}^{(\kappa)}\bigl)\Bigr\}\nonumber\\
&&\quad\stackrel{(a)}{=}\underset{(i,k)\in\mathcal{D}}{\min}\Bigl\{C_{i,k}^{\mathtt{D}}\bigl(\mathbf{X}_i^{(\kappa)},\alpha_i^{(\kappa)}\bigr) 
 - \underset{m\in\mathcal{M}}{\max}\,C_{m,i,k}^{\mathtt{ED}}\bigl(\mathbf{X}_i^{(\kappa)},\alpha_i^{(\kappa)}\bigr)\Bigr\}\nonumber\\
&&\quad\triangleq\eta^{(\kappa)}
\end{IEEEeqnarray}
where the equality $(a)$ is obtained by using the equalities in \eqref{eq:D1} and \eqref{eq:D2}. This result shows that the objective  is non-decreasing with  iteration number, i.e., $ \eta^{(\kappa+1)} \geq \eta^{(\kappa)}$. Also, it is clear that  $\bigl(\mathbf{X}^{(\kappa+1)},\boldsymbol{\alpha}^{(\kappa+1)}\bigr)$ is a better point for \eqref{eq:OP4} than $\bigl(\mathbf{X}^{(\kappa)},\boldsymbol{\alpha}^{(\kappa)}\bigr)$ whenever $\bigl(\mathbf{X}^{(\kappa+1)},\boldsymbol{\alpha}^{(\kappa+1)}\bigr)\neq\bigl(\mathbf{X}^{(\kappa)},\boldsymbol{\alpha}^{(\kappa)}\bigr)$. For  problem  \eqref{eq:OP4}, we have $\mathcal{V}^{(\kappa+1)}\supseteq\mathcal{V}^{(\kappa)}$ which is an immediate consequence and $\{\eta^{(\kappa)}\}_{\kappa \geq 1}$ is non-decreasing sequence. According to the Cauchy's theorem, the sequence $\{\eta^{(\kappa)}\}$ is bounded, i.e., $\underset{\kappa\rightarrow +\infty}{\lim}\eta^{(\kappa)} = \bar{\eta}$ with a limit point $\bar{\Psi}$. Then, each accumulation point $\Psi^{(\kappa)}$, if $\Psi^{(\kappa+1)}=\Psi^{(\kappa)}$, is a KKT point for \eqref{eq:OP5},  which obviously is also a KKT point for \eqref{eq:OP4} according to \cite[Theorem 1]{Marks:78}.  Proposition \ref{pro:2} is thus proved.

\section*{Appendix~E\\ \quad\  Proof of Lemma \ref{lemma:1} } \label{Appendix:E}
\renewcommand{\theequation}{E.\arabic{equation}}
\setcounter{equation}{0}
Under the assumption of the independence of Eves' channels,  constraint \eqref{eq:OP7:c} can be computed as
\begin{IEEEeqnarray}{rCl}\label{eq:E1}
&&\Pro\Bigl(\underset{m\in\mathcal{M}}{\max} C_{m,i,k}^{\mathtt{ED}}(\mathbf{X}_i,\alpha_i)\leq \Gamma_{i,k}^{\mathtt{D}}\Bigr) \geq \epsilon_{i,k} \nonumber\\
\Leftrightarrow\;&& \Pro\bigl( C_{m,i,k}^{\mathtt{ED}}(\mathbf{X}_i,\alpha_i)\leq \Gamma_{i,k}^{\mathtt{D}}\bigr) \geq \epsilon_{i,k}^{1/M}.
\end{IEEEeqnarray}
Note that the inequality \eqref{eq:E1} holds  easier if   Eves' channels are dependent since its RHS yields a smaller value. We further rewrite \eqref{eq:E1} based on the basic property of probability as
\begin{equation}\label{eq:E2}
 \eqref{eq:E1} \Leftrightarrow \Pro\bigl( C_{m,i,k}^{\mathtt{ED}}(\mathbf{X}_i,\alpha_i)\geq\Gamma_{i,k}^{\mathtt{D}}\bigr) \leq 1-\epsilon_{i,k}^{1/M}.
\end{equation}
Observe that the LHS of \eqref{eq:E2} is coupled between the optimization variables $(\mathbf{X}_i,\alpha_i)$ and the SCSI of the Eves. On the other hand, it requires an upper bound of the LHS of \eqref{eq:E2}, which is the outage probability for DL user ($i,k$). We make use of the well-known Markov inequality (MI) \cite{Billingsleybook,AkgunTCOM17}, i.e., $\Pro(Y \geq y) \leq \mathbb{E}\{Y\}/y$, to manipulate the LHS of \eqref{eq:E2} as
\begin{IEEEeqnarray}{rCl}\label{eq:E3}
&& \Pro\Bigl(\ln\Bigl(1 + \frac{\|\mathbf{H}_{m}^H\mathbf{w}_{i,k}\|^2}{\psi_{m,i,k}(\mathbf{X}_i)} \Bigr) \geq \alpha_i\Gamma_{i,k}^{\mathtt{D}} \Bigr)\qquad\qquad\qquad\\
 &&\quad =  \Pro\Bigl(\|\mathbf{H}_{m}^H\mathbf{w}_{i,k}\|^2  + \bigl(1-e^{\alpha_i\Gamma_{i,k}^{\mathtt{D}}}\bigr)\psi'_{m,i,k}(\mathbf{X}_i) \nonumber\\
&&\qquad\qquad\qquad\qquad\qquad\;  \geq  \bigl(e^{\alpha_i\Gamma_{i,k}^{\mathtt{D}}}-1\bigr) N_{e,m}\sigma^2  \Bigr)
\end{IEEEeqnarray}
\begin{IEEEeqnarray}{rCl}
&&\quad\stackrel{\mathtt{MI}}{\leq}\frac{\mathbb{E}\bigl\{\mathbf{w}_{i,k}^H\mathbf{H}_{m}\mathbf{H}_{m}^H\mathbf{w}_{i,k}+ \bigl(1-e^{\alpha_i\Gamma_{i,k}^{\mathtt{D}}}\bigr)\psi'_{m,i,k}(\mathbf{X}_i)\bigr\}}{\bigl(e^{\alpha_i\Gamma_{i,k}^{\mathtt{D}}}-1\bigr) N_{e,m}\sigma^2}\qquad\\
&&\quad = \frac{\mathbf{w}_{i,k}^H\bar{\mathbf{H}}_{m}\mathbf{w}_{i,k}+ \bigl(1-e^{\alpha_i\Gamma_{i,k}^{\mathtt{D}}}\bigr)\bar{\psi}_{m,i,k}(\mathbf{X}_i)}{\bigl(e^{\alpha_i\Gamma_{i,k}^{\mathtt{D}}}-1\bigr) N_{e,m}\sigma^2}\label{eq:E6}
\end{IEEEeqnarray}
where $\psi_{m,i,k}'(\mathbf{X}_i) = \psi_{m,i,k}(\mathbf{X}_i) - N_{e,m}\sigma^2$, and $\bar{\psi}_{m,i,k}(\mathbf{X}_i)$ is the expectation of $\psi_{m,i,k}'(\mathbf{X}_i)$ w.r.t. \eqref{eq:CDI}.  By replacing  the LHS of \eqref{eq:E2} with \eqref{eq:E6} and  after some straightforward manipulations, we arrive at \eqref{eq:OP7:c1}. It can be shown in a similar manner that \eqref{eq:OP7:d}  is converted to \eqref{eq:OP7:d1}, and thus the proof is completed.

\balance
\bibliographystyle{IEEEtran}
\bibliography{IEEEfull}
\end{document}